\newtheoremstyle{resultat}%
   {\baselineskip}%
   {\baselineskip}%
   {\itshape}%
   {}%
   {\bfseries}%
   {}%
   { }%
   {}%
\theoremstyle{resultat}
\newtheorem{theorem}{Theorem}
\newtheorem{lemma}{Lemma}
\newtheorem{remark}{Remark}
\newtheorem{definition}{Definition}
\newtheoremstyle{remarque}%
   {\baselineskip}%
   {\baselineskip}%
   {}%
   {}%
   {\itshape}%
   {:}%
   { }%
   {}%
\newcommand{\N}{\mathbb{N}}
\newcommand{\R}{\mathbb{R}}
\newcommand{\C}{\mathbb{C}}
\newcommand{\loc}{loc}
\newcommand{\eps}{\varepsilon}
\newcommand{\bv}{{\bf v}}
\newcommand{\oomega}{{\bf w}}
\renewcommand{\ll}{\left\langle}
\newcommand{\rr}{\right\rangle}
\newcommand{\Hcurl}{{H}_{{curl}}(\mathbb{R}^3)}
\newcommand{\VV}{\mathcal{V}}
\newcommand{\HH}{\mathcal{H}}
\newcommand{\LL}{{L}^2(\R^3)}
\DeclareMathOperator{\imag}{{Im}}
\DeclareMathOperator{\real}{{Re}}
\DeclareMathOperator{\curl}{{curl}}
\numberwithin{equation}{section}
\numberwithin{theorem}{section}
\numberwithin{lemma}{section}
\numberwithin{proposition}{section}
\numberwithin{corollary}{section}
\numberwithin{definition}{section}
\numberwithin{remark}{section}
\author[H.-M. Nguyen]{Hoai-Minh Nguyen}
\author[V. Vinoles]{Valentin Vinoles}
\address[H.-M. Nguyen]{Department of Mathematics \newline\indent
	EPFL SB CAMA \newline\indent
	Station 8 CH-1015 Lausanne, Switzerland}
\email{hoai-minh.nguyen@epfl.ch}
\address[V. Vinoles]{Department of Mathematics \newline\indent
	EPFL SB CAMA \newline\indent
	Station 8 CH-1015 Lausanne, Switzerland}
\email{valentin.vinoles@epfl.ch}
\title[Electromagnetic wave propagation in dispersive metamaterials]{Electromagnetic wave propagation in media consisting of dispersive metamaterials}
\begin{document}

\maketitle

\begin{abstract}
We establish the well-posedness, the finite speed propagation, and a regularity result for Maxwell's equations in media consisting of dispersive (frequency dependent) metamaterials. Two typical examples for such metamaterials are materials obeying Drude's and Lorentz' models. 
The causality and the passivity  are the two main assumptions and play a crucial role in the analysis. It is worth noting that by contrast the well-posedness in the frequency domain is not ensured in general. We also provide some numerical experiments using the Drude's model to illustrate its dispersive behaviour.
% and show how it can be used to model negative index metamaterials.
\end{abstract}

\medskip 
\noindent {\bf MSC.}  35B34, 35B35, 35B40, 35J05, 78A25, 78M35. 

\noindent{\bf Key words}: dispersive media, Maxwell's equations, negative index materials, hyperbolic metamaterials. 
%\section{Introduction and statement of the main results}

\section{Introduction}
Metamaterials are smart materials engineered to have properties that have 
not yet been found in nature.
They have recently attracted a lot of attention from the scientific community, 
not only because of potentially interesting applications,
but also because of challenges in understanding their peculiar properties. 

%\medskip 
An important class of metamaterials is   the one of negative index metamaterials (NIMs).
The study of NIMs was initiated a few decades ago in the seminal work of Veselago~\cite{Veselago},
in which the existence of such materials was postulated. The existence of NIMs was confirmed by Shelby, Smith, and Schultz in \cite{Shelby}. 
New fabrication techniques now allow the construction
of NIMs at scales that are interesting for applications, and have made them a
very  active topic of investigation. One of the interesting properties of NIMs is superlensing, 
i.e., the possibility to beat   the Rayleigh diffraction limit: no constraint between the size of the object and the wavelength is imposed. 
This was first proposed by Veselago for a slab of index $-1$ and later studied in various contexts in \cite{NicoroviciMcPhedranMilton94, PendryNegative, PendryCylindricalLenses, PendryRamakrishna,NicoroviciMcPhedranMiltonPodolskiy1}. The rigorous proof of superlensing was given in \cite{Ng-Superlensing, Ng-Superlensing-Maxwell} for related 
lens designs. 
Another interesting application of NIMs is cloaking  objects. Various schemes  were suggested  in  \cite{LaiChenZhangChanComplementary, Ng-CALR-object} and established rigorously in \cite{Ng-Negative-Cloaking, Ng-CALR-object}. NIMs can be used for cloaking sources, see,  e.g.,  \cite{MiltonNicorovici, Ng-CALR}. 
Another attracting class of metamaterials is the one of hyperbolic metamaterials (HMMs). HMMs can be used for superlensing, see  \cite{BonnetierNguyen,Jacob06, Liu07};   other promising potential applications of HMMs can be found in  \cite{Poddubny13} and references therein. The peculiar properties and the difficulties in the study of NIMs come from 
the fact that the modelling equations have sign changing coefficients. In contrast, the modelling of HMMs involves equations of changing type, elliptic in some regions, hyperbolic in others.

% NIMs can be used for  cloaking a source via anomalous localized 
%resonance. This was discovered by Milton and Nirorovici in \cite{MiltonNicorovici} and later studied intensively in 
% \cite{AmmariCiraoloKangLeeMilton, AmmariCiraoloKangLeeMilton2,  BouchitteSchweizer10, BrunoLintner07, KohnLu, Milton-folded, Ng-CALR-CRAS, Ng-CALR, Ng-CALR-frequency, MinhLoc1} and references therein. The last application of NIMs mentioned here is  
%cloaking an arbitrary object via anomalous localized resonance which was proposed and established rigorously by the first author in \cite{Ng-CALR-object}. 

The well-posedness of equations modelling metamaterials has been investigated mainly in the frequency domain. Concerning NIMs,  it is now known that 
one needs to impose conditions on the coefficients of the equations near  the sign-changing coefficient-interface to insure the well-posedness,  see \cite{AnneSophieChesnelCiarlet1, CostabelErnst, Ng-Complementary, Ng-WP, Ola} and references therein, otherwise the equations are unstable, see \cite{Ng-WP}. Concerning HMMs, it is shown in \cite{BonnetierNguyen} that the stability is very sensitive with the geometry of the hyperbolic region. As far as we know, there are very few works on the stability of metamaterials apart from NIMs in the frequency domain.

%The field for other classes of NIMs is widely open and the situation might be involved, for example, it is shown in \cite{} that the stability of HMMs is very sensitive with respect to the geometry of the region of metamaterials. 

This work is on Maxwell's equations in the time domain for media consisting of dispersive metamaterials. These are metamaterials whose material constants are of frequency dependence. Two typical examples for such metamaterials are the ones obeying Drude's and Lorentz' models. The study of dispersive metamaterials in the time domain for NIMs was considered by Gralak and Tip in \cite{GralakTip}. They investigated the well-posedness of Maxwell's equations in the two dimensional space setting in which NIMs occupy a half-plane and obey Drude's model.  Under the same setting, B\'{e}cache, Joly, and the second author in \cite{BJV} showed the instability of the standard PMLs 
 and design a new one in this context.  Again for this setting, the limiting amplitude principle was studied by Cassier, Hazard, and Joly in \cite{Joly16} and confirmed numerically  in \cite{vinoles}.  
 
%The propagation of electromagnetic waves in matter gives rise to dispersion: the response of the waves to the ambient medium depends on their frequency. Thus, apart from the case of vacuum, any reasonable modelling of the propagation of electromagnetic waves should lead to dispersive models (see for instance \cite[chapter 7]{jackson1999electrodynamics}, \cite[chapter IX]{landau1984electrodynamics} or \cite[chapter 8]{van2007electromagnetic}).
%
%Interest in dispersive media has increased dramatically in recent years because they are used to model metamaterials, that is to say artificial materials whose permittivity and/or permeability is negative at certain frequencies \cite{smith2004metamaterials}. Metamaterials are an active research subject because of their many promising applications \cite{cui2010metamaterials}: super-lens, improved antenna, cloaking, etc.
%
%
%From a mathematical point of view, these dispersive models raise new and interesting questions (for instance about the consequences of the changes in the sign of the coefficients for transmission problems) that require new specific tools. We refer to the introduction of \cite{vinoles2016problemes} and the survey \cite{li2016literature} for more details and references.

In this paper, we deal with bi-anisotropic media, i.e., media for which the electric and magnetic induction fields $D$ and $B$ depend on both electric and magnetic fields $E$ and $H$. This general class of metamaterials  covers the usual anisotropic one for which $D$ (resp. $B$) depends only on $E$ (resp. $H$). In particular, the bi-anisotropic class contains NIMs  and HMMs. More precisely, we establish the well-posedess for weak solutions associated to  this model (Theorem~\ref{thm:wellPosedness} in Section~\ref{sect:math}), the finite speed propagation of weak solutions associated with these media (Theorem~\ref{thm:finitePropagationSpeed} in Section~\ref{sect:math}), and a regularity result for the weak solutions (Theorem~\ref{thm-reg} in Section~\ref{sect-reg}). By the  dispersivity, the corresponding evolution equations are non-local in time. Two key assumptions in the analysis are the causality and the passivity ones which roughly speaking say that the effect cannot precede the cause and the medium is dissipative rather than produces electromagnetic energy. In this paper, we work directly with the non-local equations. This is different from the approaches in \cite{cassier2017,GralakTip,Joly16,vinoles} where Drude's model is used and auxiliary fields are introduced to transform the non-local equations into local ones using the special structure of Drude's model. Nevertheless, this model and its particular structure are used in the simulations presented in Section ~\ref{sect:num} for simplicity. 

%This study will be the starting point for investigation various applications of metamaterials in the time domain. 

This paper is organized as follows. In  Section \ref{sect:model}, we  present the dispersive model for the Maxwell's equations. We there discuss bi-anisotropic media but  confine ourselves to  linear and local-in-space ones.  The well-posedness, the finite speed propagation of electromagnetic fields, and the regularity result are discussed in Section~\ref{sect:math}. Finally, some numerical experiments are presented in Section ~\ref{sect:num}.

\section{Maxwell's equations in dispersive media} \label{sect:model}
In this section, we describe Maxwell's equations in dispersive media.  
The materials presented here are mainly  from  \cite[chapter 7]{jackson}, \cite[chapters 1 and 2]{kong}, \cite[chapter IX]{landau}, \cite{nussenzveig} and \cite[chapter 1]{mackay}.
The fundamental Maxwell's equations -- without source --  are
\begin{equation}\label{eq:generalMaxwellEquations}
\left\{ \begin{aligned}
& \partial_t D(t, x) = \curl   H(t, x), \\
& \partial_t  B(t, x) =- \curl  E(t, x),
\end{aligned} \right. \quad  \mbox{ for } t \in \R,\ x \in \R^3,
\end{equation}
where $ E \in \R^3$ (resp. $ H \in \R^3$) is the electric (resp. magnetic) field and $ D\in \R^3$ (resp. $ B\in \R^3$) is the electric (resp. magnetic) induction field.  In order to close the system \eqref{eq:generalMaxwellEquations}, one adds constitutive relations that express $D$ and $B$ as functions of $E$ and $H$. For dispersive media, these relations are more conveniently presented in the frequency domain. 
In this paper, for a time-dependent field $X(t, x)$, its temporal Fourier transform  is given by
\begin{equation}\label{FT}
\widehat{ X}(\omega, x):= \frac{1}{\sqrt{2 \pi}}\int_{\R} X(t, x) e^{i\omega t}\, d t,\quad  \mbox{ for } \omega \in \R,\  x \in \R^3.
\end{equation}
In the frequency domain, the Maxwell's equations \eqref{eq:generalMaxwellEquations} are of the form
\begin{equation}\label{eq:generalMaxwellEquations2}
\left\{ \begin{aligned}
& - i \omega \widehat{D}(\omega, x) =  \curl   \widehat{H}(\omega, x), \\
& - i \omega \widehat{B}(\omega, x) =  -\curl   \widehat{E}(\omega, x),
\end{aligned} \right. \quad  \mbox{ for } \omega \in \R,\  x \in \R^3.
\end{equation}
In this paper, we consider linear bi-anisotropic materials, i.e., $D$ and $B$ depend linearly on \emph{both} $E$ and $H$ (see, e.g., \cite[Chapters 1 and 2]{kong} and \cite[Chapter 1]{mackay}). This class of materials contains  the anisotropic ones for which $D$ (resp. $B$) depends only on $E$ (resp. $B$), see, e.g., \cite[chapter 7]{jackson} and \cite[chapter IX]{landau}. We also assume that
the media considered are local in space.  The constitutive relations in the frequency domain of bi-anisotropic media are then of the form
\begin{equation}\label{eq:constitutiveRelation}
\left\{ \begin{aligned}
&  \widehat{ D}(\omega, x) 
=  \big( \eps_{rel}(x) + \widehat{\chi_{ee}} (\omega,x)\big) \widehat{ E}(\omega, x) +  \widehat{\chi_{em}} (\omega,x)  \widehat{ H}(\omega, x),  \\
&  \widehat{ B}(\omega, x)
=   \widehat{\chi_{me}} (\omega,x) \widehat{ E}(\omega, x) + \big( \mu_{rel}(x) + \widehat{\chi_{mm}} (\omega,x)  \widehat{ H}(\omega, x), 
\end{aligned}\right.
 \quad  \mbox{ for } \omega \in \R,\  x \in \R^3. 
\end{equation}
Here $\widehat{\chi_{ij}}(\omega,x)$, $(i,j) \in \{e,m\}^2$, are $3 \times 3$ matrices called the \emph{susceptibilities} that characterize the dispersive effects of the medium, i.e.,  its response with respect to the frequency $\omega$ at the point $x$. The permittivity $\eps$ and the permeability $\mu$ of the medium are given  by
\begin{equation}
\widehat{\eps} : = \eps_{rel} + \widehat \chi_{ee} \quad \mbox{ and } \quad \widehat{\mu}: = \mu_{rel} + \widehat \chi_{mm}.
\end{equation}
We assume that 
\begin{equation}\label{epsmu}
\mbox{$\eps_{rel}$ and $\mu_{rel}$ are two  $3 \times 3$ real symmetric uniformly elliptic matrices defined in $\R^3$.}
\end{equation}
One can check that $\eps_{rel}$ and $\mu_{rel}$ correspond  respectively to $\widehat \eps$ and $\widehat \mu$ for large frequency provided that $\widehat{\chi_{ee}}$ and $\widehat{\chi_{mm}}$ are in $L^1(\R, L^\infty(\R^3)^{3 \times 3})$.  These constitution relations are Lorentz covariants (see, e.g.. \cite[chapter 2]{kong}). 

If all the $\widehat{\chi_{ij}}$ are independent of $\omega$, the corresponding medium is called a \emph{dielectric medium};  otherwise it is a \emph{dispersive medium}. In the case $\chi_{em}  = \chi_{me} = 0$, \eqref{eq:constitutiveRelation} models anisotropic media. In a special case of \eqref{eq:constitutiveRelation} in which $\chi_{ij}$ are isotropic, media are called reciprocal  chiral and  consist of Pasteur  and Tellegen ones, see, e.g.,  \cite{shivola}.  

%\begin{remark}
%In the general case where all the $\chi_{ij}$ are all non-zero, the corresponding media are called \emph{bi-anisotropic media} (see \cite{mackay2010electromagnetic} and references therein for more details). When $\chi_{em} = \chi_{me} =0$ -- which is the most common case -- the corresponding media are simply called \emph{anisotropic media}. For all these cases, when the $\chi_{ij}$ are all scalar quantities, the term ``anisotropic'' is replaced by ``isotropic''. This paper deals with the most  general case of bi-anisotropic media. 
%\end{remark}

%\subsection{Maxwell's equations} 
Set 
\begin{equation}\label{def-lambda}
\widehat{\lambda_{ij}}(\omega, x) := - i \omega \widehat{\chi_{ij}}(\omega, x), \quad \mbox{ for } (i,j) \in \{e,m\}^2,\ \omega \in \R,\  x \in \R^3.
\end{equation}
Inserting \eqref{eq:constitutiveRelation} in \eqref{eq:generalMaxwellEquations2} gives, for  $\omega \in \R$ and  $x \in \R^3$, 
\begin{equation}\label{eq:generalMaxwellEquations3}
\left\{ \begin{aligned}
& - i \omega \eps_{rel}(x) \widehat{E}(\omega, x) + \widehat{\lambda_{ee}} (\omega,x) \widehat{ E}(\omega, x) +  \widehat{\lambda_{em}} (\omega,x)  \widehat{ H}(\omega, x) =  \curl   \widehat{H}(\omega, x), \\
& - i \omega  \mu_{rel} (x) \widehat{H}(\omega, x) + \widehat{\lambda_{me}} (\omega,x)  \widehat{ E}(\omega, x) +  \widehat{\lambda_{mm}} (\omega,x)  \widehat{ H}(\omega, x) =-  \curl   \widehat{E}(\omega, x). 
\end{aligned} \right.  
\end{equation}
One can derive that $\lambda_{ij}$ is analytic in the upper half $\omega$-plane and continuous up to the boundary of the half plane as long as 
\begin{equation}
\lambda_{ij} \in L^1\big(\R, L^\infty( \R^3)^{3 \times 3} \big) + L^\infty\big(\R, L^\infty( \R^3)^{3 \times 3} \big), \qquad \mbox{ for } (i,j) \in \{e,m\}^2. 
\end{equation}
This allows to use Cauchy's theorem and obtain a relation between $\real \widehat{\chi_{ij}}$ and $\imag \widehat{\chi_{ij}}$, which is known as  the Kramers-Kronig relation (see, e.g., \cite{nussenzveig, Toll} for further information). 
We will make the following assumptions on $\lambda_{i j}$: 
\begin{equation}\label{assumption-lambda}
\widehat{\lambda_{i j}},  \, \lambda_{i j} \in L^1_{\loc}\big(\R, L^\infty( \R^3)^{3 \times 3} \big) \ \mbox{ and } \ \lambda_{ij}  \mbox{ is real-valued},  \qquad \mbox{ for } (i, j) \in \big\{e, m \big\}^2. 
\end{equation}
By the inverse Fourier transform 
\begin{equation}
X(t, x) =  \frac{1}{\sqrt{2 \pi}}\int_{\R} \widehat{ X}(\omega, x) e^{-i \omega t}\,   d \omega,  \quad  \mbox{ for } t \in \R,\ x \in \R^3, 
\end{equation}
the corresponding system of \eqref{eq:generalMaxwellEquations3} in time domain is  \begin{equation}\label{eq:generalMaxwellEquations4}
\left\{ \begin{aligned}
& \eps_{rel}(x) \partial_t E(t, x) + (\lambda_{ee} * E)(t, x) +  (\lambda_{em} *H)(t, x) =  \curl H(t, x), \\
&  \mu_{rel}(x) \partial_t H(t, x)+  (\lambda_{me}* E)(t, x) + (\lambda_{mm}*H)(t,x)   = -\curl E(t, x),
\end{aligned}\right.
 \quad  t \in \R,\  x \in \R^3,
\end{equation}
where $*$ stands for the convolution with respect to time $t$.
%\begin{equation}\label{eq:convolutionProducts0}
%(\Lambda *  X)(t,\cdot) = \int_{-\infty}^{+\infty} \Lambda(t-\tau,\cdot)  X(\tau,\cdot)\,  d \tau,
%\end{equation}
%for $t >0$ and $(\Lambda,X) \in \{ (\lambda_{ee}, E),(\lambda_{em},H),(\lambda_{me},E),(\lambda_{mm},H)\}$. 

\medskip 
Two fundamental assumptions physically relevant on the model,  causality and passivity,  are imposed.  

\medskip 
\noindent {\bf Causality}: the effect
cannot precede the cause, i.e., the present states of the system depend only on its states in the past. Mathematically, one requires 
\begin{equation}\label{eq:causality}
\lambda_{ij}(t) = 0, \qquad \text{for all $t<0$ and for all $(i,j) \in \{e,m\}^2$}.
\end{equation}

Under this assumption, we have, for $(i, j)  \in \{e, m \}^2$, 
\begin{equation}\label{eq-causality1}
(\lambda_{ij} *  X)(t,\cdot) = \int_{-\infty}^t \lambda_{ij} (t-\tau,\cdot) X(\tau,\cdot)\,  d \tau = \int_{0}^\infty \lambda_{ij} (\tau,\cdot) X(t - \tau,\cdot)\,  d \tau,\quad  \mbox{ for } t \in \R. 
\end{equation}

%for $t >0$ and $(\lambda,X) \in \{ (\lambda_{ee}, E),(\lambda_{em},H),(\lambda_{me},E),(\lambda_{mm},H)\}$. 

%Furthermore, one need to prescribe the values of $E$ and $H$ not only at $t= 0$ (initial conditions) but also for all $t < 0$. In the rest of this paper we consider the simpler case $E(t,\cdot) = H(t,\dot)=0$ for all $t<0$ but our analysis can be extended to cases where $E$ and $H$ are not zero for $t<0$.

%\subsection{Passivity} \label{sec:passivity}
\medskip 
\noindent {\bf Passivity}: One assumes,  for almost every $x \in \R^3$, for almost every $\omega \in \R$ and for all $X \in \C^6$, that\footnote{Here $\cdot$ stands for the Euclidean scalar product in $\C^6$.}
\begin{equation}\label{eq:passivity}
\real \left( \begin{bmatrix}
\widehat{\lambda_{ee}}(\omega,x) & \widehat{\lambda_{em}}(\omega,x) \\ 
\widehat{\lambda_{me}}(\omega,x) & \widehat{\lambda_{mm}}(\omega,x)
\end{bmatrix} X \cdot \overline X \right) \ge 0, 
\end{equation}

Under the terms of $\chi_{ij}$ (see \eqref{def-lambda}),  condition~\eqref{eq:passivity} can be written as 
\begin{equation}\label{eq:passivity-1}
\omega \imag \left( \begin{bmatrix}
\widehat{\chi_{ee}}(\omega,x) & \widehat{\chi_{em}}(\omega,x) \\ 
\widehat{\chi_{me}}(\omega,x) & \widehat{\chi_{mm}}(\omega,x)
\end{bmatrix} X \cdot \overline X \right) \ge 0. 
\end{equation}
%This means the medium is of ``{\it lossy}" type. 
Assumption \eqref{eq:passivity} means that the medium is dissipative, i.e., it 
does not produce electromagnetic energy by itself.  We emphasize that  no assumption on the sign of the real part of the $\chi_{ij}$ in  \eqref{eq:passivity-1}
is required (or equivalently on the imaginary part of the $\lambda_{ij}$ in \eqref{eq:passivity}). Moreover, no symmetry on the $\chi_{ij}$ (or equivalently on the $\lambda_{ij}$) is assumed.  

\medskip 
Some comments on these assumptions are in order in the anisotropic case ($\lambda_{em} = \lambda_{me}  = 0$) and in the frequency domain. It is possible for some frequencies that $\widehat \eps$ and $\widehat \mu$ are both negative in some regions.
This corresponds to NIMs (see Lorentz' and Drude's models below). It is also possible that $\widehat \eps$ and $\widehat \mu$ have both positive and negative eigenvalues in some region. In this case, one deals with HMMs.  In the anisotropic case, condition~\eqref{eq:passivity-1} is equivalent to\footnote{Here for a $3 \times 3$ matrix $A$, we denote $A \le 0$ if $A x \cdot x \le 0$ for all $x \in \R^3$.}
\begin{equation}\label{sign-eps-mu}
\omega \imag \widehat \eps(\omega), \ \omega \imag \widehat \mu(\omega)  \ge 0,\quad  \mbox{ for almost all } \omega \in \R. 
\end{equation}
Condition~\eqref{sign-eps-mu} ensures that when small loss is added, the problem associated with the outgoing (Silver-M\"uller) condition at infinity is well-posed (see,  e.g., \cite{Ng-Superlensing-Maxwell}). Adding a small loss is the standard mechanism to study phenomena related to metamaterials in the frequency domain. 
Nevertheless, condition~\eqref{sign-eps-mu} does not exclude the ill-posedness in the frequency domain (see \cite[Proposition 2]{Ng-WP}). As one sees later, even if the problem is ill-posed in the frequency domain for some frequency, the well-posedness is ensured for the problem in the time domain under  roughly speaking the causality and passivity conditions mentioned above (see Theorem~\ref{thm:wellPosedness}). 

%Under suitable technical assumptions,  the passivity assumption  \eqref{eq:passivity} implies the causality property \eqref{eq:causality} (see for instance \cite[Appendix B]{nussenzveig1972causality}). 

% by itself (see in particular the energy estimate \eqref{eq:approximateAPrioriEstimate} in Theorem \ref{thm:wellPosedness}). 
%We refer to \cite{cassier2017mathematical} and references therein for an advanced discussion about the different formulations of passivity (and their link with the theory of the so-called \emph{Herglotz functions}).

%\subsection{The Lorentz' and the Drude's models} 

\medskip 
We next recall two typical examples of dispersive anisotropic media ($\chi_{me}  = \chi_{em} = 0$) satisfying condition \eqref{assumption-lambda},  the causality \eqref{eq:causality} and the passivity \eqref{eq:passivity}. 
The first one is media obeying \emph{Lorentz' model}. For a homogeneous isotropic medium, the susceptibilities $\chi_{ee}$ and $\chi_{mm}$ are of the form (see e.g.,  \cite[(7.51)]{jackson})
\begin{equation}\label{eq:LorentzModelFrequencyDomain}
\widehat{\chi}(\omega) = \sum_{\ell=1}^n \frac{\omega_{p,\ell}^2}{\omega_{0,\ell}^2 - \omega^2  - 2i \gamma_\ell \omega}\, {I_3},\quad  \mbox{ for } \omega \in \R. 
\end{equation}
where $\omega_{p,\ell}$ (resp. $\omega_{0,\ell}$ and $\gamma_\ell$) are positive (resp. non negative)  material constants. Here and in what follows ${I_3}$ denotes the $3 \times 3$ identity matrix.  Using the residue theorem, one can  show (see e.g.,  \cite[(7.110)]{jackson}) that for $t \in \R$ one has
\begin{equation}\label{eq:LorentzModelTimeDomain}
\chi(t) =  \sqrt{2 \pi} \theta(t) \sum_{\ell=1}^n \omega_{p,\ell}^2\, \frac{\sin(\nu_\ell t)}{\nu_\ell}\, e^{-\gamma_\ell t }\, {I_3}
\quad \text{and} \quad \lambda(t) = \sqrt{2 \pi} \theta(t) \sum_{\ell=1}^n \omega_{p,\ell}^2\, \frac{d}{dt} \left( \frac{\sin(\nu_\ell t)}{\nu_\ell}\, e^{-\gamma_\ell t } \right) {I_3},
\end{equation}
where $\nu_\ell^2 = \omega_{0,\ell}^2-\gamma_\ell^2$ and  $\theta$ is the Heaviside function, i.e., $\theta(t) = 1$ if $t \ge 0$ and $\theta(t) = 0$ otherwise. Here $\lambda$ is defined in such a way that $\widehat{\lambda}(\omega) = -  i \omega \widehat{\chi}(\omega)$ for $\omega \in \R$.

 One can easily check that Lorentz' model satisfies conditions \eqref{assumption-lambda},  \eqref{eq:causality}, and \eqref{eq:passivity} (which implies  \eqref{eq:passivity-1}). 

The second example is  \emph{Drude's model}. It is  a particular case of the Lorentz model \eqref{eq:LorentzModelFrequencyDomain} with $n=1$ and $\omega_{0,1} = 0$:
\begin{equation}\label{eq:DrudeModel}
\widehat{\chi}(\omega) = \frac{\omega_p^2}{- \omega^2  - 2i \gamma \omega}\, {I_3},\quad  \mbox{ for } \omega \in \R. 
\end{equation}
One thus has
\begin{equation}\label{eq:DrudeModelTimeDomain}
\chi(t) =  \sqrt{2 \pi}  \omega_p^2 \gamma^{-1} (1-e^{-\gamma t}) \theta(t)\, {I_3}\quad \text{and} \quad \lambda(t) =  \sqrt{2 \pi}  \omega_p^2 e^{-\gamma t} \theta(t)\, {I_3},\quad  \mbox{ for } t \in \R.  
\end{equation}

\begin{remark}\label{rem-Milton} \rm Using Lorentz' model for $\widehat \mu$ is probably not too realistic (see, e.g., \cite[\S 60]{landau}) but has the advantage that the imaginary part of $\widehat{\chi_{mm}}$ has a minimum which can be minimized to weaken the loss effect. 
\end{remark}

\begin{remark} \rm
Using homogeneization theory, one can obtain HMMs from positive index materials and NIMs (see, e.g., \cite{BonnetierNguyen}).
\end{remark}

\begin{remark}  \rm In \cite{cassier2017}, the authors proposed various conditions on  dispersive models. 
Some of their postulates are not required here. 

%The model considered hers is more general then theirs. 

\end{remark}

\section{Electromagnetic wave propagation in dispersive media}\label{sect:math}

%\subsection{Statements of the main results}\label{sec:main}

In this paper, we study \eqref{eq:generalMaxwellEquations4} under the form of the initial problem at the time $t = 0 $ assuming that the data are known in the past $t<0$. 
Set 
\begin{equation}
\label{eq:truncatedConvolutionProduct}
(\lambda_{ij} \star  X)(t,\cdot) := \int_{0}^t \lambda (t-\tau,\cdot) X(\tau,\cdot)\,   d \tau, \quad  \mbox{ for } t > 0. 
\end{equation}
For $X = E$ or $H$, under the causality assumption \eqref{eq:causality}-\eqref{eq-causality1}, one has for $t>0$ that
\begin{equation}\label{reformulate}
\begin{aligned}
(\lambda_{ij} *  X)(t,\cdot)&  =  \int_{0}^t \lambda_{ij} (t-\tau,\cdot) X(\tau,\cdot)\,  d \tau + \int_{-\infty}^0 \lambda_{ij} (t-\tau,\cdot) X(\tau,\cdot)\,  d \tau  \nonumber \\
& = (\lambda_{ij} \star  X)(t,\cdot) + \int_{-\infty}^0 \lambda_{ij} (t-\tau,\cdot) X(\tau,\cdot)\,  d \tau.
\end{aligned}
\end{equation}
Hence if  the data are known for the past $t < 0$, then the last term is known at time $t > 0$. 
With the presence of sources,  one can then reformulate system   \eqref{eq:generalMaxwellEquations4} under the form
\begin{equation}\label{eq:problem0}
\left\{ 
\begin{aligned}
& \eps_{rel}(x) \partial_t E(t, x) + (\lambda_{ee} \star E)(t, x) +  (\lambda_{em} \star H)(t, x) =  \curl H(t, x)+f_e(t,x), \\
&  \mu_{rel}(x)  \partial_t H(t, x)+  (\lambda_{me}\star E)(t, x) + (\lambda_{mm}\star H)(t,x)   = - \curl E(t, x)+f_m(t,x),\\
& E(0,x) = E_0(x),  \ H(0,x) = H_0(x), 
\end{aligned}\right.
\end{equation}
for $t > 0$ and $x \in \R^3$. Here $E_0, \, H_0$ are the initial data at time $t =0$ and $f_e, \ f_m$ are given fields which can be considered as ``effective" sources since they also take into account the last terms in \eqref{reformulate}. Note that if sources are 0 for $t < 0$, then the initial problem considered here with $E_0  = H_0 = 0$ gives exactly the solutions of \eqref{eq:generalMaxwellEquations4} admitting that $E = H = 0$ for $t < 0$ since there is no source for $t < 0$ (see Remark~\ref{rem-Unique}).  

Set
\begin{equation}\label{eq:notations}
u := \begin{bmatrix}
E \\ H
\end{bmatrix}, \quad 
u_0 := \begin{bmatrix}
E_0 \\ H_0
\end{bmatrix}, \quad
f := \begin{bmatrix}
f_e \\ f_m
\end{bmatrix}, \quad
\mathbb A u := \begin{bmatrix}
\curl H \\ - \curl E
\end{bmatrix},
\end{equation}
\begin{equation}\label{eq:notations-1}
\Lambda := \begin{bmatrix}
\lambda_{ee} & \lambda_{em} \\
\lambda_{me} & \lambda_{mm} 
\end{bmatrix} \quad \mbox{ and } \quad 
M := \begin{bmatrix}
\eps_{rel} & 0 \\
0 & \mu_{rel} 
\end{bmatrix}.
\end{equation}
System \eqref{eq:problem0} can then be rewritten in the following compact form: 
\begin{equation}\label{eq:problem}
\left\{ 
\begin{aligned}
& M(x) \partial_t u(t,x) + (\Lambda \star u)(t,x) = \mathbb A u(t,x) + f(t,x), \\
& u(0,x) = u_0(x),  \\
\end{aligned}\right.  \quad  \mbox{ for } t> 0,\  x \in \R^3.
\end{equation}
The goal of this paper  is to establish the well-posedness, the finite speed propagation and to present a regularity result for \eqref{eq:problem}.

%First, we need to introduce a suitable functional framework.

%\subsection{Functional framework} 

Define
\begin{equation}
\HH := \LL^3 \times \LL^3 \quad \mbox{ and } \quad \VV := \Hcurl \times \Hcurl, 
\end{equation}
equipped with the standard inner products induced from $\LL^3$ and $\Hcurl$.  One can verify that $\HH$ and $\VV$ are Hilbert spaces. 
%Note that $\frac{1}{2}\| u \|_{\HH}^2$ is nothing but the classical electromagnetic energy. Set
%\begin{equation}
%\VV := \Hcurl \times \Hcurl,
%\end{equation}
%where $\Hcurl$ is given by
%\begin{equation}
%\Hcurl:=\left\{ \varphi \in \LL^3 \colon \curl  \varphi \in \LL^3\right\},
%\end{equation}
%equipped with the  norm 
%\begin{equation}
%\| \varphi \|_{\Hcurl} := \left(\|\varphi\|_{\LL^3}^2 + \| \curl \varphi \|_{\LL^3}^2\right)^{1/2}. 
%\end{equation}
%If $u = (\varphi, \phi) \in \Hcurl \times \Hcurl$, then 
%$$
%\| u\|_{\VV}^2: = \| \varphi\|_{\Hcurl}^2 + \| \phi \|_{\Hcurl}^2. 
%$$
We also denote 
\begin{equation}\label{def-M6}
\mbox{$\mathcal M_6( L^\infty(\R^3))$ the space of $6 \times 6$ real matrices whose entries are $ L^\infty(\R^3)$ functions.}
\end{equation}
In what follows, in the time domain, we only consider {\it real} quantities. 

\medskip
The first result of this paper is the well-posedness of \eqref{eq:problem}, whose proof is given in Section~\ref{sec:wp}:

\begin{theorem}\label{thm:wellPosedness}
Let $T \in (0,+\infty)$, $u_0  \in \HH$, $f \in {L}^1(0,T;\HH)$,  and $\Lambda \in  L^1\big(0,T;\mathcal M_6( L^\infty(\R^3) \big)$. Assume that \eqref{epsmu}, \eqref{assumption-lambda}, \eqref{eq:causality} and
\eqref{eq:passivity} hold.  There exists a unique weak solution $u \in  L^\infty(0,T;\HH)$  of \eqref{eq:problem} on $(0,T)$. Moreover, the following estimate holds 
\begin{equation}\label{eq:APrioriEstimate}
\ll M u(t, \cdot), u(t, \cdot) \rr_\HH  \le  \left(\ll M u_0, u_0 \rr_\HH^{1/2}   + C \int_0^t \|  f(s, \cdot) \|_{\HH} \,   ds\right)^2 \quad \mbox{ in } (0, T), 
\end{equation}
where $C$ is a positive constant depending only on the coercivity of $M$. 
\end{theorem}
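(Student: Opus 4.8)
The plan is to base the whole proof on an energy identity obtained by pairing \eqref{eq:problem} with $u$ in $\HH$, with the two physical hypotheses entering through a single sign condition. First I would record the two algebraic facts that make the method work. Since the domain is all of $\R^3$ and the components of $u$ lie in $\Hcurl$, integration by parts gives $\ll \mathbb A u, u \rr_\HH = \ll \curl H, E\rr_{\LL^3} - \ll \curl E, H \rr_{\LL^3} = 0$, i.e.\ $\mathbb A$ is skew-adjoint, so the Maxwell term carries no energy. Secondly, and this is where I expect the main difficulty, I would prove the positivity of the memory term: for every $u \in L^2(0,T;\HH)$ and every $t \in (0,T)$,
\begin{equation*}
\int_0^t \ll (\Lambda \star u)(s,\cdot), u(s,\cdot)\rr_\HH \, ds \ge 0 .
\end{equation*}
To see this I would extend $u$ by zero outside $[0,t]$; causality \eqref{eq:causality} then identifies the truncated convolution $\star$ with the full convolution on $[0,t]$ and shows the convolution vanishes for negative times, so the integral may be taken over all of $\R$. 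Parseval's identity turns it into $\int_\R \real \ll \widehat\Lambda(\omega)\, \widehat u(\omega), \widehat u(\omega)\rr \, d\omega$ up to a positive normalisation constant, which is nonnegative by the passivity assumption \eqref{eq:passivity}. The interplay of causality and passivity is thus the crux of the whole argument.

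Granting these two facts, the a priori estimate \eqref{eq:APrioriEstimate} follows quickly. Pairing \eqref{eq:problem} with $u$ and integrating on $(0,t)$ gives $\tfrac12 \ll M u(t),u(t)\rr_\HH - \tfrac12 \ll M u_0,u_0\rr_\HH + \int_0^t \ll \Lambda\star u,u\rr_\HH \, ds = \int_0^t \ll f,u\rr_\HH \, ds$; dropping the nonnegative memory term, bounding $\ll f,u\rr_\HH \le \|f\|_\HH \|u\|_\HH \le C\|f\|_\HH \ll Mu,u\rr_\HH^{1/2}$ via the coercivity \eqref{epsmu} of $M$, and setting $\mathcal E(t) := \ll Mu(t),u(t)\rr_\HH$ yields $\mathcal E(t) \le \mathcal E(0) + 2C\int_0^t \|f(s)\|_\HH \,\mathcal E(s)^{1/2}\, ds$. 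A standard quadratic Gr\"onwall lemma (differentiating the square root of the right-hand side) then gives $\mathcal E(t)^{1/2} \le \mathcal E(0)^{1/2} + C\int_0^t \|f\|_\HH$, which is exactly \eqref{eq:APrioriEstimate}. One technical point I would not skip is that a weak solution lies only in $L^\infty(0,T;\HH)$, so the identity $\ll M\partial_t u,u\rr_\HH = \tfrac12 \tfrac{d}{dt}\ll Mu,u\rr_\HH$ must first be justified by a time-regularisation (Steklov averaging or mollification in $t$) followed by a passage to the limit. Uniqueness is then immediate: by linearity the difference of two solutions solves \eqref{eq:problem} with $u_0=0$ and $f=0$, and the estimate forces $\mathcal E\equiv 0$, hence $u\equiv 0$.

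For existence I would use a Galerkin scheme. Fix an $\HH$-orthonormal basis $\{e_k\}_{k\ge1}$ of $\HH$ with $e_k \in \VV$, set $\VV_N := \mathrm{span}\{e_1,\dots,e_N\}$, and seek $u_N(t)=\sum_{k=1}^N c_k(t)e_k$ solving the projected equations. Because $\ll Me_k,e_j\rr_\HH$ is a symmetric positive-definite, hence invertible, matrix, the projection is a linear Volterra integro-differential system with $L^1$ kernel $\Lambda$ and $L^1$ forcing, which admits a unique global solution by Picard iteration. Crucially the two facts above survive the projection, since $\ll \mathbb A u_N,u_N\rr_\HH=0$ because $u_N\in\VV$ and the memory term keeps its sign, so the energy estimate holds uniformly in $N$ and bounds $u_N$ in $L^\infty(0,T;\HH)$. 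Extracting a weak-$*$ convergent subsequence $u_N \stackrel{*}{\rightharpoonup} u$, I would pass to the limit in the linear weak formulation; the convolution term converges because $\Lambda\in L^1\big(0,T;\mathcal M_6(L^\infty(\R^3))\big)$. Finally the initial condition is recovered by testing with $\zeta(t)v(x)$, $v\in\VV$, $\zeta\in C^1([0,T])$ with $\zeta(T)=0$, integrating by parts in time, and comparing the resulting boundary term with $\ll Mu_0,v\rr_\HH$. I expect the memory-term positivity and the justification of the energy identity for merely $L^\infty$-in-time fields to be the only genuinely delicate points; the Galerkin limit and the Gr\"onwall argument are routine.
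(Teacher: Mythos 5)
Your existence argument and a priori estimate follow the paper's proof almost verbatim: Galerkin projection onto a basis of $\VV$, solvability of the projected Volterra system by fixed point, the two key facts $\ll v,\mathbb A v\rr_\HH=0$ for $v\in\VV$ and the nonnegativity of $\int_0^t\ll(\Lambda\star v)(s,\cdot),v(s,\cdot)\rr_\HH\,ds$ (proved exactly as in the paper by zero-extension, causality, Parseval and passivity), the quadratic Gr\"onwall lemma with exponent $1/2$, and weak-$*$ passage to the limit. That part is correct.

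The uniqueness step, however, has a genuine gap. You claim it is ``immediate'' from the energy estimate, modulo justifying $\ll M\partial_t u,u\rr_\HH=\tfrac12\tfrac{d}{dt}\ll Mu,u\rr_\HH$ by time-regularisation. But the obstruction to running the energy identity on an arbitrary weak solution is not temporal, it is spatial: a weak solution only satisfies $u(t,\cdot)\in\HH$, so $u(t,\cdot)$ is not an admissible test function in \eqref{eq:defWeakSolution} (which requires $v\in\VV$), and the term $\ll u(t,\cdot),\mathbb A u(t,\cdot)\rr_\HH$ is simply undefined because $\mathbb A u(t,\cdot)$ need not lie in $\HH$. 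Steklov averaging or mollification in $t$ leaves the spatial regularity untouched and therefore does not repair this; mollifying in $x$ instead would create commutator terms with the variable coefficients $M(x)$ and $\Lambda(t,x)$ that you would then have to control. The paper's resolution is to pass to the time-primitive $U(t,\cdot)=\int_0^t u(s,\cdot)\,ds$: integrating the weak formulation in time and using the identity $\int_0^t(\Lambda\star u)(s,\cdot)\,ds=(\Lambda\star U)(t,\cdot)$ (Fubini), one finds that $U$ solves $M\partial_t U+\Lambda\star U=\mathbb A U$, and this very equation shows $\mathbb A U(t,\cdot)=Mu(t,\cdot)+(\Lambda\star U)(t,\cdot)\in\HH$, i.e.\ $U\in L^1(0,T;\VV)$. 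The energy identity is then legitimately applied to $U$, the memory term is dropped by passivity, and $U\equiv0$, hence $u\equiv0$. Without this (or an equivalent duality/adjoint-problem argument), your uniqueness proof does not go through.
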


%\subsection{Assumptions} 
%
%We consider a source term $f$ and an initial condition $u_0$ such that
%\begin{equation} \label{eq:assumptions}
%f \in {L}^\infty(0,T;\HH) \quad \text{and} \quad u_0 \in \VV.
%\end{equation}
%We also consider
%\begin{equation}\label{eq:assumptionsLambda}
%\Lambda \in  L^1(0,T;\mathcal M_6( L^\infty(\R))),
%\end{equation}
%where $\mathcal M_6( L^\infty(\R))$ is the space of $6 \times 6$ real matrices whose entries are $ L^\infty(\R)$ functions. 
%
%\begin{remark}
%We do not impose any kind of symmetry on $\Lambda$. 
%\end{remark}

%As said in Section \ref{sec:passivity}, we restrict ourselves to passive media. Actually we only need a weaker version of passivity.

The notion of weak solutions for \eqref{eq:problem} is: 

\begin{definition}\label{defi:weakSolution} \rm
Let $T \in (0,+\infty)$, $u_0 \in \HH$ and $f \in  L^1(0,T;\HH)$. A function  $u \in  L^\infty(0,T;\HH)$ is called a \emph{weak solution} of \eqref{eq:problem} on $[0,T]$ if 
\begin{equation}\label{eq:defWeakSolution}
\frac{ d}{ dt} \ll M u(t, \cdot ),v \rr_\HH + \ll (\Lambda \star u)(t, \cdot),v\rr_\HH   =  \ll u(t, \cdot), \mathbb A v \rr_\HH  + \ll f(t, \cdot),v \rr_\HH \mbox{ in } (0, T)  \mbox{ for all $v \in \VV$},
\end{equation}
and \begin{equation}\label{eq:initialConditions}
u(0, \cdot) = u_0. 
\end{equation}
\end{definition}

%\begin{remark} \rm One can easily check that if $u$ is a smooth solution and decays enough at infinity, then $u$ is a weak solution by integration by parts.
%\end{remark} 

\begin{remark} \rm One can easily check that if $u$ is a smooth solution and decays enough at infinity, then $u$ is a weak solution by integration by parts, and that if $u$ is a weak solution and smooth then $u$ is a classical solution. 
\end{remark} 

Some comments on Definition~\eqref{defi:weakSolution} are in order. Equation \eqref{eq:defWeakSolution} is understood in the distributional sense. Initial condition   \eqref{eq:initialConditions}  is understood as 
\begin{equation}\label{trace-sense}
\ll M u(0, \cdot),v \rr_\HH  = \ll M u_0,v \rr_\HH, \quad \mbox{ for all } v \in \VV. 
\end{equation}
Under the assumptions $u \in  L^\infty(0,T;\HH)$, $v \in \VV$, $f \in {L}^1(0,T;\HH)$  and $\Lambda \in  L^1 \big(0,T;\mathcal M_6( L^\infty(\R) \big)$, one can check that $ \ll (\Lambda \star u)(t),v\rr_\HH$, $ \ll u(t), \mathbb A v \rr_\HH$, $\ll f(t),v \rr_\HH$ are in $L^1(0, T)$. It follows from \eqref{eq:defWeakSolution} that 
\begin{equation}
\ll M u(t),v \rr_\HH  \in W^{1, 1}(0, T). 
\end{equation}
This in turn ensures the trace sense of $\ll M u(0, \cdot),v \rr_\HH $ in \eqref{trace-sense}. 

\medskip 

We next discuss the finite speed propagation for \eqref{eq:problem}. 
In what follows, $B(a,R)$ stands for the ball of $\R^3$ of radius $R>0$ centred at $a \in \R^3$ and $\partial B(a, R)$ denotes its boundary. In the case $a = 0$ -- the origin -- we simply denote $B(0, R)$ by $B_R$. 
Set  
\begin{equation}
c(x) :=  \gamma_e (x) \gamma_{m} (x),   \quad  \mbox{ for } x \in \R^3, 
\end{equation}
where $\gamma_e(x)$ and $\gamma_m(x)$ are respectively the largest eigenvalues of $\eps_{rel}(x)^{-1/2}$ and $\mu_{rel}(x)^{-1/2}$. According to assumptions \eqref{epsmu}, $c(x)$ is bounded below and above by a positive constant. For $a \in \R^3$ and $R> 0$, we denote  
\begin{equation}\label{eq:defSpeed}
c_{a, R}:= \mathop{\mbox{ess sup}}_{x \in B(a, R)} c(x).   
\end{equation}
The second  result of this paper is on the  finite speed propagation of \eqref{eq:problem}, whose proof is given in Section~\ref{sec:fps}. 

\begin{theorem}\label{thm:finitePropagationSpeed} Let $R>0$, $a \in \R^3$ and $u_0  \in \HH$. For $T > R/c_{a,R}$, let $f \in  L^1(0, T;\HH)$ and $\Lambda \in  L^1(0, T;\mathcal M_6\big( L^\infty(\R^3)\big)$. Assume that \eqref{epsmu}, \eqref{assumption-lambda}, \eqref{eq:causality} and
\eqref{eq:passivity} hold, 
\begin{equation}
{supp}\, u_0 \cap B(a,R)  = \emptyset,
\end{equation}
and
\begin{equation}
{supp}\, f(t, \cdot) \cap B(a,R-c_{a, R} t) = \emptyset, \quad \mbox{ for almost every } t \in (0, R/ c_{a, R}). 
\end{equation}
Let $u \in  L^\infty(0, T; \HH)$ be  the unique weak solution of \eqref{eq:problem} on $(0, T)$. Then 
\begin{equation}
{supp}\, u(t, \cdot) \cap  B(a,R-c_{a, R} t) = \emptyset, \quad \mbox{ for almost every } t \in (0, R/ c_{a, R}). 
\end{equation}
\end{theorem}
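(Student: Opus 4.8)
The plan is to prove finite speed propagation by an energy method, using a localized version of the energy identity that underlies the a priori estimate \eqref{eq:APrioriEstimate} in Theorem~\ref{thm:wellPosedness}. The key geometric object is the backward light cone over the ball $B(a,R)$: for $t \in (0, R/c_{a,R})$ one considers the shrinking balls $B(a, R - c_{a,R} t)$, whose lateral boundary has slope exactly $c_{a,R}$. Since $\supp u_0 \cap B(a,R) = \emptyset$ and the source $f$ vanishes on these balls, the goal is to show the solution stays zero inside the cone. The natural quantity to track is the localized electromagnetic energy
\begin{equation}\label{eq:localEnergy}
e(t) := \frac{1}{2} \int_{B(a, R - c_{a,R} t)} M(x) u(t,x) \cdot u(t,x) \, dx,
\end{equation}
and the aim is to prove $e(t) \le 0$, hence $e(t) = 0$, for almost every $t$ in the relevant interval by combining coercivity of $M$ with a differential inequality $e'(t) \le 0$.

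First I would formally differentiate $e(t)$ in time. Two contributions appear: the volume term from $\partial_t u$, handled via the equation \eqref{eq:problem}, and a negative boundary term coming from the shrinking domain (the derivative of the moving boundary contributes $-\tfrac{c_{a,R}}{2} \int_{\partial B(a, R - c_{a,R}t)} M u \cdot u$). Substituting $M \partial_t u = \mathbb A u - \Lambda \star u + f$ into the volume term, the $\mathbb A u$ piece $\int (\mathbb A u) \cdot u = \int (\curl H \cdot E - \curl E \cdot H)$ integrates by parts to a pure boundary flux $\int_{\partial B} (E \times H) \cdot n$. The crucial pointwise estimate is that this Poynting flux is controlled by the energy density times the cone speed: using $|E \times H| \le \gamma_e^{-1}\gamma_m^{-1}$-type bounds, or more cleanly the Cauchy--Schwarz bound $|(E \times H)\cdot n| \le c(x)\, \tfrac{1}{2}(M u \cdot u)$ after diagonalizing $M$, one sees the flux term is dominated by the negative boundary term generated by the moving domain, precisely because $c_{a,R} = \esssup c$ on the ball. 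The source term $f$ vanishes on $B(a, R - c_{a,R}t)$ by hypothesis, so it does not contribute, and the convolution term $\Lambda \star u$ is nonnegative in energy by the passivity assumption \eqref{eq:passivity}.

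Handling the convolution term $\ll \Lambda \star u, u \rr$ is where passivity \eqref{eq:passivity} enters decisively, and I expect it to be the main obstacle, for two reasons. First, passivity is stated in the frequency domain as positivity of the real part of the symbol, so one must translate it into a statement about the time-domain convolution against the localized field; the clean way is to integrate $\int_0^t \ll \Lambda \star u, u \rr \, ds$ over time and recognize it (via Plancherel, after extending $u$ by zero) as $\int_\R \real \big( \widehat{\Lambda} \widehat u \cdot \overline{\widehat u}\big) \, d\omega \ge 0$. Second, this argument is global in space and time, whereas the energy method needs it localized to the cone; reconciling the nonlocal-in-time convolution with the spatial truncation requires care, since $u$ restricted to the cone is not itself the convolution of a truncated field. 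I would resolve this by working with the \emph{time-integrated} localized inequality rather than the pointwise $e'(t) \le 0$: integrate the energy balance from $0$ to $t$, use causality \eqref{eq:causality} to keep the convolution supported in $[0,s]$, and exploit that the cone geometry ensures the only data feeding $B(a, R - c_{a,R}s)$ at time $s$ comes from outside the support-free region, which vanishes.

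Finally, I would make the formal computation rigorous at the level of weak solutions. Since $u \in L^\infty(0,T;\HH)$ has only $L^2$ spatial regularity, the boundary/flux integrals and the pointwise energy identity are not literally defined, so I would regularize: replace the sharp indicator of $B(a, R - c_{a,R}t)$ by a smooth Lipschitz cutoff $\varphi$ adapted to the cone (with $|\partial_t \varphi| \le c_{a,R}|\nabla \varphi|$ reflecting the slope condition), test the weak formulation \eqref{eq:defWeakSolution} against $v = \varphi^2 u$ after a time-mollification to justify differentiating $\ll M u, u\rr$, and then let the cutoff steepen to the cone. The slope condition on $\varphi$ is exactly what converts the flux term and the cutoff-derivative term into a sign-definite combination, mirroring the role of $c_{a,R}$ above. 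Passing to the limit and invoking coercivity of $M$ from \eqref{epsmu} yields $\int_{B(a,R-c_{a,R}t)} |u(t,\cdot)|^2 = 0$ for almost every $t \in (0, R/c_{a,R})$, which is the claimed support statement; uniqueness from Theorem~\ref{thm:wellPosedness} guarantees this applies to \emph{the} weak solution.
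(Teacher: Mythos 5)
Your formal computation is exactly the one the paper runs: the energy on the shrinking balls $B(a,R-c_{a,R}t)$, the moving-boundary term $-\tfrac{c}{2}\int_{\partial B}MU\cdot U$, the Poynting flux bounded by $|E||H|\le \tfrac{c(x)}{2}\,M u\cdot u$ (which is where $c_{a,R}=\esssup c$ enters), the vanishing of $f$ on the cone, and the time-integrated passivity inequality for the convolution term. The problem is the step where you make this rigorous. You propose to test the weak formulation against $v=\varphi^2 u$ after a time-mollification. This is not admissible: Definition~\ref{defi:weakSolution} requires test functions in $\VV=\Hcurl\times\Hcurl$, while $u(t,\cdot)$ (and hence $\varphi^2 u(t,\cdot)$) is only known to lie in $\HH=L^2$. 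The pairing $\ll u,\mathbb A(\varphi^2u)\rr_\HH$ is therefore undefined, and time-mollification cannot supply the missing spatial $\curl$-regularity; a smooth spatial cutoff does not help either. So the localized integration by parts producing the Poynting flux has no meaning at the regularity level you actually have, and the argument as written does not close.

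The paper's way around this (the same device as in the uniqueness part of Theorem~\ref{thm:wellPosedness}) is to integrate the equation in time first: set $U(t,\cdot)=\int_0^t u(s,\cdot)\,ds$, derive $M\partial_t U+(\Lambda\star U)=\mathbb A U+F+Mu_0$ with $F=\int_0^t f$, and read off from this equation that $\mathbb A U\in L^1(0,T;\HH)$, i.e. $U\in L^1(0,T;\VV)$. The entire localized energy argument is then carried out on $U$ (for which $\int_{B}\mathbb A U\cdot U$ is a genuine volume integral and the divergence theorem applies), yielding $U(t,\cdot)=0$ on $B(a,R-c_{a,R}t)$; since the balls are nested and $u=\partial_t U$, the support statement for $u$ follows. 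If you replace your cutoff/mollification step by this passage to the antiderivative, the rest of your outline goes through essentially verbatim. One further point to tighten: your justification of the localized passivity term is circular as stated ("the only data feeding the ball ... vanishes" is what you are trying to prove); the correct argument fixes $x$, notes that by causality and the nesting of the balls the convolution at $(s,x)$ for $s$ up to the exit time of $x$ from the cone involves only $U(\tau,x)$ with $\tau$ in that same interval, and then applies the Plancherel/passivity argument in the time variable alone for each fixed $x$ before integrating in $x$.
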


We finally discuss  the regularity of the weak solutions of \eqref{eq:problem}. To motivate the regularity result stated below, let  us first assume that $u$ is a weak solution of  \eqref{eq:problem}  and that $u$, $\Lambda$, and $f$ are regular in $[0, T] \times \R^3$.  Set
\begin{equation}
v(t, x):= \partial_t u(t, x), \quad  \mbox{ for }  t \in (0,T),\ x \in  \R^3.
\end{equation}
Differentiating  \eqref{eq:problem} with respect to $t$, we have  
\begin{equation}
M(x) \partial_t v (t, x) + (\Lambda \star v) (t, x) = \mathbb A v(t, x) + g(t, x), \quad  \mbox{ for }  t \in (0,T),\ x \in  \R^3, 
\end{equation}
where 
\begin{equation}
g(t, x) := \partial_t f(t, x) - \Lambda(t, x) u_0(x)  \quad  \mbox{ in } [0, T) \times \R^3. 
\end{equation}
Applying Theorem~\ref{thm:wellPosedness} to $v$ and noting that 
$M v(0, \cdot) =  \mathbb A u_0 + f(0, \cdot)$, we obtain  
\begin{equation}
\| v(t, \cdot) \|_\HH \le C \Big( \| u_0 \|_\VV + \| f(0, \cdot) \|_\HH + \int_0^t \|\partial_s f (s, \cdot) \|_{\HH} + \| \Lambda(s, \cdot)\|_{L^\infty(\R^3)} \, ds  \Big), \quad  \mbox{ in }  (0,T),
\end{equation}
for some positive constant $C$ depending only on the ellipticity of $M$.

\medskip 
In fact, we can prove  the following result.
 \begin{theorem}\label{thm-reg}
Let $T \in (0,+\infty)$, $u_0  \in \VV$, $f \in {L}^1(0,T;\HH)$,  and $\Lambda \in  L^1\big(0,T;\mathcal M_6( L^\infty(\R^3) \big)$. Assume that \eqref{epsmu}, \eqref{assumption-lambda}, \eqref{eq:causality} and
\eqref{eq:passivity} hold and $\partial_t f \in {L}^1(0,T;\HH)$. Let  $u \in  L^\infty(0,T;\HH)$ be the unique weak solution  of \eqref{eq:problem} on $(0,T)$. Then $\partial_t u \in L^\infty(0, T;\HH)$ and,  for $t \in (0, T)$, 
\begin{equation}\label{reg-1}
\| \partial_t u(t, \cdot)\|_\HH^2  \le  C \left(  \| u_0 \|_\VV + \| f(0, \cdot) \|_\HH + \int_0^t \|\partial_s f (s, \cdot) \|_{\HH} + \| \Lambda(s, \cdot)\|_{L^\infty(\R^3)} \| u(s, \cdot) \|_{\HH} \, ds \right)^2,  
\end{equation}
for some positive constant  $C$ depending only on the coercivity of $M$.
\end{theorem}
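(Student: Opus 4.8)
\emph{Strategy.} The plan is to show that $v:=\partial_t u$ is itself a weak solution of a problem of the form \eqref{eq:problem} — with the \emph{same} matrices $M,\Lambda$ and the same operator $\mathbb A$, but with initial datum determined by $Mv(0)=\mathbb A u_0+f(0,\cdot)$ and effective source $g:=\partial_t f-\Lambda(\cdot,\cdot)u_0$ — and then to invoke Theorem~\ref{thm:wellPosedness} together with its a priori bound \eqref{eq:APrioriEstimate} applied to $v$. Formally this is exactly the computation sketched before the statement: differentiating \eqref{eq:problem} in $t$ and using the Leibniz rule for the causal truncated convolution \eqref{eq:truncatedConvolutionProduct}, namely $\partial_t(\Lambda\star u)(t)=\Lambda(t)u_0+(\Lambda\star\partial_t u)(t)$, reproduces \eqref{eq:problem} for $v$. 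The initial datum is read off from \eqref{eq:problem} at $t=0$, where $(\Lambda\star u)(0)=0$; it lives in $\HH$ precisely because $u_0\in\VV$ (so $\mathbb A u_0\in\HH$) and because $f\in W^{1,1}(0,T;\HH)\hookrightarrow C([0,T];\HH)$ makes $f(0,\cdot)\in\HH$ meaningful. Since the convolution kernel for $v$ is still $\Lambda$, the passivity assumption \eqref{eq:passivity} applies verbatim, which is what lets Theorem~\ref{thm:wellPosedness} be reused with no new structural hypothesis.

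\emph{Rigorous justification via difference quotients.} Since $u$ is only known to lie in $L^\infty(0,T;\HH)$, the differentiation above must be justified at the level of the weak formulation \eqref{eq:defWeakSolution}. I would do this with the forward difference quotients $v_h(t):=h^{-1}\big(u(t+h,\cdot)-u(t,\cdot)\big)$ on $(0,T-h)$. Subtracting the weak formulation at $t+h$ and at $t$ and dividing by $h$, one checks that each $v_h$ is a weak solution of \eqref{eq:problem} with the same $M,\mathbb A,\Lambda$, with initial datum $v_h(0)=h^{-1}\big(u(h,\cdot)-u_0\big)$ and an explicit source $G_h$. The source $G_h$ splits into the difference quotient $h^{-1}\big(f(\cdot+h)-f\big)$, which converges to $\partial_t f$ in $L^1$, and a convolution remainder produced by the interaction of the shift with the lower limit of the truncated convolution; the latter is controlled, uniformly in $h$, by a quantity of the form $\int_0^t\|\Lambda(s,\cdot)\|_{L^\infty(\R^3)}\,\|u(s,\cdot)\|_\HH\,ds$, which is exactly the convolution contribution appearing in \eqref{reg-1}. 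Applying \eqref{eq:APrioriEstimate} to each $v_h$ then yields a bound on $\ll M v_h(t),v_h(t)\rr_\HH$ uniform in $h$.

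\emph{Passage to the limit.} With these uniform bounds, $\{v_h\}$ is bounded in $L^\infty(0,T';\HH)$; extracting a weak-$*$ limit and identifying it with the distributional derivative gives $\partial_t u\in L^\infty(0,T;\HH)$, and the estimate \eqref{reg-1} survives the limit by weak-$*$ lower semicontinuity of the $\HH$-norm together with the coercivity of $M$. On the right-hand side of \eqref{reg-1}, the terms $\|u_0\|_\VV$ and $\|f(0,\cdot)\|_\HH$ are precisely the bound for $\ll Mv(0),v(0)\rr_\HH^{1/2}$, while $\int_0^t\|\partial_s f(s,\cdot)\|_\HH\,ds$ and $\int_0^t\|\Lambda(s,\cdot)\|_{L^\infty(\R^3)}\|u(s,\cdot)\|_\HH\,ds$ arise from $\int_0^t\|G_h(s,\cdot)\|_\HH\,ds$.

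\emph{Main obstacle.} The hard part is the uniform control of the initial datum $v_h(0)=h^{-1}\big(u(h,\cdot)-u_0\big)$. One cannot simply bound $\|\mathbb A u(s,\cdot)\|_\HH$ for $s>0$, since the theorem asserts only temporal (not spatial) regularity and $u(s,\cdot)$ is not known to belong to $\VV$. Instead I would integrate the weak formulation \eqref{eq:defWeakSolution} over $(0,h)$ against $v\in\VV$ and use $u_0\in\VV$ to move $\mathbb A$ by parts onto $u_0$, obtaining $\ll Mv_h(0),v\rr_\HH\to\ll\mathbb A u_0+f(0,\cdot),v\rr_\HH$ with a bound uniform in $h$; securing this \emph{uniform} bound — rather than mere weak convergence, which controls the limit only from below — is the delicate step, and it is exactly where the hypothesis $u_0\in\VV$ is indispensable. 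A secondary technical point is the careful bookkeeping of the convolution remainder in $G_h$ near $t=0$ and the verification that the weak-$*$ limit of $v_h$ indeed solves the differentiated weak problem, so that no boundary contribution is lost in the passage to the limit.
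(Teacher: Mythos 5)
Your overall strategy---showing that $v=\partial_t u$ solves the same system with source $g=\partial_t f-\Lambda(\cdot,\cdot)u_0$ and initial datum determined by $Mv(0)=\mathbb A u_0+f(0,\cdot)$, and then reusing the energy estimate of Theorem~\ref{thm:wellPosedness}---is the right one, and your bookkeeping of the differentiated convolution and of the source $G_h$ is correct. But the step you yourself flag as the main obstacle is a genuine gap, and the fix you sketch does not close it. Integrating the weak formulation \eqref{eq:defWeakSolution} over $(0,h)$ and dividing by $h$ gives
\begin{equation*}
\ll M v_h(0),v\rr_\HH=\frac1h\int_0^h\ll u(s,\cdot),\mathbb A v\rr_\HH\,ds+\frac1h\int_0^h\ll f(s,\cdot)-(\Lambda\star u)(s,\cdot),v\rr_\HH\,ds .
\end{equation*}
In the first term the field paired with $\mathbb A v$ is $u(s,\cdot)$ for $s\in(0,h)$, not $u_0$; since $u(s,\cdot)$ is only known to lie in $\HH$, you cannot integrate by parts to move $\mathbb A$ off $v$, and the hypothesis $u_0\in\VV$ does not help, because replacing $u(s,\cdot)$ by $u_0$ costs $h^{-1}\int_0^h\|u(s,\cdot)-u_0\|\,ds$ in a norm you do not control (the weak formulation only gives that this difference is $O(s)$ when tested against elements of $\VV$, i.e.\ in the dual of $\VV$, not in $\HH$). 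What you actually obtain is $|\ll Mv_h(0),v\rr_\HH|\le C\|v\|_\VV$, i.e.\ boundedness of $Mv_h(0)$ in the dual of $\VV$, not in $\HH$. A uniform $\HH$-bound on $h^{-1}(u(h,\cdot)-u_0)$ is essentially equivalent to the conclusion you are trying to prove near $t=0$, so the argument is circular at exactly this point, and without it the energy estimate for $v_h$ cannot even be started.

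The paper sidesteps this entirely by never differentiating the weak solution itself: it differentiates the Galerkin system. Since $u_n(t,x)=\sum_{k=1}^n d_{n,k}(t)\phi_k(x)$ with the $d_{n,k}$ solving an ODE system, $v_n=\partial_t u_n$ is classically defined, its equation \eqref{vn-11} carries the source $g_n=\partial_t f-\Lambda u_{0,n}$, and its initial value is read off from the ODE at $t=0$: $\ll Mv_n(0),\phi_k\rr_\HH=\ll u_n(0),\mathbb A\phi_k\rr_\HH+\ll f(0,\cdot),\phi_k\rr_\HH$, so that $M^{1/2}v_n(0)$ is the $\HH$-projection of $M^{-1/2}\big(\mathbb A u_0+f(0,\cdot)\big)$ onto the span of $\{M^{1/2}\phi_k\}$, whence $\|v_n(0)\|_\HH\le C\big(\|u_0\|_\VV+\|f(0,\cdot)\|_\HH\big)$ uniformly in $n$. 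The rest of your argument (energy estimate for $v_n$, then weak-$*$ passage to the limit) goes through verbatim at that level. If you wish to retain the difference-quotient route, you would first have to establish the result for data in the Galerkin spaces and pass to the limit, which is the paper's argument in disguise.
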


\begin{remark} \rm One can bound $\| u(s, \cdot) \|_{\HH}$ using \eqref{eq:APrioriEstimate}. 

\end{remark}

The next three sections are respectively devoted to the proof of Theorems \ref{thm:wellPosedness}, \ref{thm:finitePropagationSpeed}, and \ref{thm-reg}.

\subsection{Proof of Theorem \ref{thm:wellPosedness}}\label{sec:wp}

%\begin{definition}\label{defi:weakPassivity}
%Let $T>0$. A function $\Lambda \in  L^1(0,T;\mathcal M_6( L^\infty(\R)))$ is \emph{weakly passive on $[0,T]$} if for all $t \in [0,T]$ and for all $u \in  L^\infty(0,T;\HH)$ one has
%\begin{equation}\label{eq:weakPassivity}
% \int_0^t  \ll (\Lambda \star u)(s),  u(s) \rr_\HH  d s \ge 0.
%\end{equation}
%\end{definition}
%
%The following proposition -- proven in  Appendix \ref{appendix} -- shows that the passivity assumptions \eqref{eq:passivity} implies the weak passivity assumptions \eqref{eq:weakPassivity}.
%
%\begin{proposition}\label{prop:passivity}
%Let $\lambda_{ij} \in  L^2(\R_+,\mathcal M_3( L^\infty(\R^3)))$ for $(i,j) \in \{e,m\}^2$ (extended by zero for negative $t$, see \eqref{eq:causality}) and let $\Lambda \in  L^2(\R_+,\mathcal M_6( L^\infty(\R^3)))$ defined by \eqref{eq:notations}. If the $\lambda_{ij}$ are passive (in the sense of \eqref{eq:passivity}), then $\Lambda$ is weakly passive on $[0,T]$ (in the sense of Definition \ref{defi:weakPassivity}) for any $T>0$.
%\end{proposition}

The proof is based on the standard Galerkin approach. We first establish the existence of a weak solution.  Let $(\phi_k)_{k  \in \N}$ be a (real) orthogonal  basis of $\VV$. For $n \in \N$, consider $u_n$ of the form  
\begin{equation}\label{eq:defun}
u_n(t, x) = \sum_{k=1}^n d_{n,k}(t) \phi_k(x),\quad  \mbox{ for }  t \in (0,T),\ x \in \R^3,
\end{equation}
such that for all $k \in \{1,\ldots,n\}$
\begin{equation}\label{eq:approximateWeakSolution}
 \frac{ d}{ dt} \ll M u_n(t),\phi_k \rr_\HH + \ll (\Lambda \star u_n)(t),\phi_k \rr_\HH = \ll u_n(t), \mathbb A\, \phi_k \rr_\HH +\ll f(t),\phi_k \rr_\HH,\quad  \mbox{ in }  (0,T),
\end{equation}
and
\begin{equation}\label{projection-u0}
u_{n}(0) =  u_{0,n}, \mbox{ the projection of $u_0$ to the space spanned by $\big\{\phi_1, \cdots, \phi_n\big\}$ in $\HH$}.
\end{equation}
Since $(\phi_k)_{k \in \N}$ is linearly independent in $\VV$, it is also linearly independent in $\HH$. This implies that the $n \times n$ matrix whose $(i,j)$-entry is given by $\langle \phi_i, \phi_j \rangle_\HH$ is invertible. 
Since 
\begin{equation}\label{on-Lambda}
\| \Lambda \star u \|_{ L^\infty(0,T;\HH)} \le \| \Lambda \|_{ L^1\big(0,T;\mathcal M_6( L^\infty(\R) \big)} \| u \|_{{L}^\infty(0,T;\HH)}, 
\end{equation}
the existence and uniqueness of $u_n$  follow by a standard point-fixed argument (see, e.g. \cite[Theorem 2.1.1]{Vol-Burton}).

%\begin{remark}\label{rmk:infinityNorm} \rm
%For $\Lambda \in  L^1(0,T;\mathcal M_6( L^\infty(\R)))$ and $u \in {L}^\infty(0,T;\HH)$, one has 
%\begin{equation}
%\| (\Lambda \star u)(t) \|_\HH \le \| \Lambda \|_{ L^1(0,t;\mathcal M_6( L^\infty(\R)))} \| u \|_{{L}^\infty(0,t;\HH)}.
%\end{equation} 
%In particular $\Lambda \star u \in  L^\infty(0,T;\HH)$ and 
%\begin{equation}\label{eq:infintyNorm}
%\| \Lambda \star u \|_{ L^\infty(0,T;\HH)} \le \| \Lambda \|_{ L^1(0,T;\mathcal M_6( L^\infty(\R)))} \| u \|_{{L}^\infty(0,T;\HH)}.
%\end{equation} 
%\end{remark}

\medskip 
We now derive an estimate for $u_n$. The key point of the analysis is the following two observations : 
\begin{equation}\label{eq:weakPassivity}
 \int_0^t  \ll (\Lambda \star v)(s, \cdot),  v(s, \cdot) \rr_\HH  d s \ge 0 \quad \mbox{ for } v \in L^\infty(0, T; \HH), \ t \in (0,T),
\end{equation}
and 
\begin{equation}\label{integration-Auu}
\ll v,  \mathbb A \,v \rr_\HH  = 0, \quad   \mbox{ for } v \in \VV. 
\end{equation}
Note that \eqref{integration-Auu} follows by an integration by parts and the density of $\mathcal C^1_{c}(\R^3)^6$ in $\VV$.  We now verify \eqref{eq:weakPassivity}.  Let  $\bv$ be the extension of $v$ in $\R$ by 0 for  $t \in \R \setminus [0, T]$. It follows from  \eqref{eq:causality} and \eqref{eq:truncatedConvolutionProduct} that
\begin{equation}
(\Lambda \star v)(s, \cdot) = (\Lambda * \bv)(s, \cdot),\quad \mbox{ for } s \in [0,t]. 
\end{equation}
By Parseval's identity,  one has, for $t \in (0,T)$,
\begin{equation}
\begin{aligned}
\int_0^t  \ll (\Lambda\star v)(s, \cdot ), v(s, \cdot) \rr_{\HH}   ds
& =  \int_{\R}  \ll  (\Lambda *  \bv)(s, \cdot ), \bv(s, \cdot) \rr_{\HH}   ds  \\
& = \real \int_\R  \ll \mathcal F (\Lambda * \bv) (\omega, \cdot) ,\overline{\widehat{\bv}(\omega, \cdot)} \rr_{\HH}   d\omega\\
& =  \int_\R \real \ll \widehat{\Lambda}(\omega, \cdot) \widehat{\bv}(\omega, \cdot) , \overline{\widehat{\bv}(\omega, \cdot)}\rr_{\HH}   d\omega \ge 0,
\end{aligned}
\end{equation}
thanks to the passivity \eqref{eq:passivity}. Assertions \eqref{eq:weakPassivity} and \eqref{integration-Auu} are proved. 

\medskip 
Multiplying  \eqref{eq:approximateWeakSolution} by $d_{n, k}(t)$ and summing with respect to $k$ yields that, in $(0,T)$,
\begin{multline}\label{eq:tmp1}
\frac{1}{2} \frac{d}{dt} \ll M u_n(t, \cdot),\ u_n(t, \cdot) \rr_\HH  + \ll (\Lambda \star u_n)(t, \cdot),u_n(t, \cdot)\rr_\HH \\
= \ll u_n(t, \cdot), \mathbb A  u_n(t, \cdot) \rr_\HH +\ll f(t, \cdot),u_n(t, \cdot) \rr_\HH.
\end{multline}
Integrating \eqref{eq:tmp1} from $0$ to $t$  and using \eqref{integration-Auu}, we obtain that, in $(0,T)$, 
\begin{multline}\label{eq:tmp2}
\frac{1}{2}\ll M u_n(t, \cdot), u_n(t, \cdot) \rr_\HH 
+ \int_0^t  \ll (\Lambda\star u_n)(s, \cdot),u_n(s, \cdot) \rr_{\HH}  ds \\
= \frac{1}{2}\ll M u_n(0, \cdot), u_n(0, \cdot) \rr_\HH  + \int_0^t \ll  f(s, \cdot),u_n(s, \cdot) \rr_{\HH}   ds .
\end{multline}
We derive from  \eqref{eq:weakPassivity}  that, in $(0,T)$,
\begin{equation}\label{eq:tmp3}
\ll M u_n(t, \cdot), u_n(t, \cdot) \rr_\HH 
\le \ll M u_{0, n}, u_{0, n} \rr_\HH  + 2 \int_0^t \|  f(s, \cdot) \|_{\HH} \| u_{n}(s, \cdot) \|_{\HH}\,   ds.
\end{equation}
By Gr\"onwall's inequality (see Lemma~\ref{lem-GW} below) and assumptions \eqref{epsmu}, one gets from \eqref{eq:tmp3}
\begin{equation}\label{eq:approximateAPrioriEstimate}
\ll M u_n(t, \cdot), u_n(t, \cdot) \rr_\HH \le  \left(\ll M u_{n, 0}, u_{n, 0} \rr_\HH^{1/2} + C \int_0^T \|  f(s) \|_{\HH} \,   ds\right)^2, \mbox{ in } (0,T),
\end{equation}
where $C$ is a positive constant depending only on the ellipticity of $M$.  Since $\| u_{n, 0} \|_\HH \le \| u_0 \|_\HH$ by \eqref{projection-u0}, the sequence $(u_n)_{n \in \N}$ is hence  bounded in ${L}^\infty(0,T;\HH)$. Up to a subsequence, $(u_n)_{n \in \N}$ weakly star converges to $u \in {L}^\infty(0,T;\HH)$. It is clear  from \eqref{eq:approximateAPrioriEstimate} that  \eqref{eq:APrioriEstimate} holds and, for $k \in \N$,  
\begin{equation}\label{eq:tmptmptmp2}
 \frac{ d}{ dt} \ll M u(t, \cdot),\phi_k \rr_\HH + \ll (\Lambda \star u )(t, \cdot),\phi_k \rr_\HH = \ll u(t, \cdot),  \mathbb A\,\phi_k \rr_\HH +\ll f(t, \cdot),\phi_k \rr_\HH, \quad \mbox{ in } (0, T).
\end{equation} 
Since $(\phi_k)$ is dense in $\VV$, we derive that for $\phi \in \VV$
\begin{equation}\label{eq:tmptmptmp3}
 \frac{ d}{ dt} \ll M u(t, \cdot),\phi \rr_\HH + \ll (\Lambda \star u )(t, \cdot),\phi \rr_\HH = \ll u(t, \cdot), \mathbb A\, \phi \rr_\HH +\ll f(t, \cdot),\phi \rr_\HH,  \quad \mbox{ in } (0, T).
\end{equation}
One can also check that the initial condition \eqref{eq:initialConditions} holds.

\medskip 
We finally establish the uniqueness of $u$.  It suffices  to show that if $u \in  L^\infty(0,T;\HH)$ is a weak solution of \eqref{eq:problem} on $[0,T]$ with $u_0 = 0$ and $f = 0$, then $u = 0$. Set
\begin{equation}
U(t, x) := \int_0^t u(s, x)\, ds, \quad \mbox{ for } t \in [0,T], \ x \in \R^3.
\end{equation}
Integrating \eqref{eq:defWeakSolution} from 0 to $t$ and using the fact that $u(t= 0, \cdot ) = 0$, we obtain that,  for all $v \in \VV$ and almost every $t \in [0,T]$,
\begin{equation} 
\ll M u(t, \cdot), v \rr_\HH    +  \int_0^t \ll (\Lambda \star u)(s, \cdot),v \rr_\HH \,  ds = \ll  U(t, \cdot), \mathbb A v \rr_\HH.
\end{equation}
Using the fact that 
\begin{equation}
\partial_t U (t, \cdot ) = u(t, \cdot),\quad \mbox{ for a.e. } t \in (0, T), 
\end{equation}
we derive that, for all $v \in \HH$,
\begin{equation}
 \ll M \partial_t U(t, \cdot), v \rr_\HH + \int_0^t \ll (\Lambda \star u)(s, \cdot) \,  ds,v \rr_\HH = \ll  U(t, \cdot ), \mathbb A v \rr_\HH, \quad \mbox{ in } (0, T). 
\end{equation}
We claim that
\begin{equation}\label{eq:uniqueness2}
\int_0^t (\Lambda \star u)(s, \cdot) \,  ds = (\Lambda \star U)(t, \cdot), \quad \text{for almost every $t \in (0,T)$}.
\end{equation}
Indeed, by Fubini's theorem,  one gets, for almost every $t \in (0,T)$,
\begin{equation}\label{thm1-p1}
\begin{aligned}
\int_0^t (\Lambda \star u)(s, \cdot) \,  ds & = 
\int_0^t \left[ \int_0^s  \Lambda(\tau, \cdot) u(s-\tau, \cdot)\,  d\tau \right]  d s = \int_0^t \Lambda(\tau, \cdot) \left[ \int_\tau^t u(s-\tau, \cdot )\,  ds \right]  d\tau \\ 
& = \int_0^t \Lambda(\tau, \cdot) \left[ \int_0^{t-\tau} u(\tau^\prime, \cdot )\,  d\tau^\prime \right]  d\tau 
 = (\Lambda \star U)(t, \cdot).
\end{aligned}
\end{equation}

From \eqref{eq:problem}, we derive that 
\begin{equation}\label{eq-U}
 M(x) \partial_t U(t,x) + (\Lambda \star U)(t,x) =  \mathbb A U(t,x),\quad  \mbox{ for } t \in (0, T), \ x \in \R^3
\end{equation}
and hence $U \in L^1(0, T; \VV)$. 
Multiplying \eqref{eq-U} by $U(t, \cdot)$, integrating with respect to $x$,  and using Fubini's theorem as well as the fact that $\ll  v,  \mathbb Av \rr_\HH = 0$ for all $v \in \VV$, we obtain
\begin{equation}
\frac{1}{2}  \frac{ d}{ dt} \ll M U(t, \cdot ), U(t, \cdot) \rr_\HH   +   \ll  (\Lambda \star U)(t, \cdot) , U(t, \cdot) \rr_\HH  =0, \quad \text{for almost every $t \in [0,T]$}.
\end{equation}
Integrating this equation from 0 to $t$ gives
\begin{equation}
\frac{1}{2}\ll M U(t, \cdot), U(t, \cdot) \rr_\HH   +  \int_0^t  \ll  (\Lambda \star U)(s, \cdot) , U(s, \cdot) \rr_\HH  ds =0, \quad \text{for almost every $t \in [0,T]$}.
\end{equation}
We derive from  \eqref{eq:weakPassivity} that  $\| U(t) \|_\HH^2 \le 0$ for almost every $t \in [0,T]$. It follows that 
\begin{equation}
U(t, \cdot)= 0,\qquad \text{for almost every $t \in [0,T]$.}
\end{equation}
This in turn implies that $u = 0$. The proof is complete. \qed

\medskip 
In the proof of Theorem~\ref{thm:wellPosedness}, we use the following Gr\"onwall's inequality:

\begin{lemma} \label{lem-GW}
Let $T>0$, $\tau \in (0,1)$, $\alpha,\beta \ge 0$ and let $\xi$ and $\phi$ be  two non-negative, measurable functions  defined in $(0, T)$ such that
\begin{equation}
\xi(t) \le \alpha + \beta \int_0^t \phi(s) \xi (s)^\tau\,  ds,\quad \text{ for almost every $t \in (0,T)$}. 
\end{equation}
We have
\begin{equation}\label{eq:gronwall}
\xi(t) \le \left( \alpha^{1-\tau} + (1-\tau) \beta  \int_0^t \phi(s)\,  ds \right)^{1/(1-\tau)}, \quad \text{ for almost every $t \in (0,T)$}.
\end{equation}
\end{lemma}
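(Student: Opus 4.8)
The plan is to reduce this nonlinear integral inequality to a linear differential inequality for its own right-hand side. First I would set
\[
\Phi(t) := \alpha + \beta \int_0^t \phi(s)\,\xi(s)^\tau\, ds,
\]
so that the hypothesis reads $\xi(t) \le \Phi(t)$ for almost every $t \in (0,T)$. Since the integral defining $\Phi$ is finite, $\Phi$ is absolutely continuous and nondecreasing on $(0,T)$ with $\Phi(0) = \alpha$ and $\Phi'(t) = \beta\,\phi(t)\,\xi(t)^\tau$ for almost every $t$. Combining $\xi(t) \le \Phi(t)$ with the monotonicity of $s \mapsto s^\tau$ on $[0,+\infty)$ then yields the closed differential inequality
\[
\Phi'(t) \le \beta\, \phi(t)\, \Phi(t)^\tau \quad \text{for almost every } t \in (0,T).
\]

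Next I would separate variables and integrate. Writing $\frac{d}{dt}\Phi(t)^{1-\tau} = (1-\tau)\,\Phi(t)^{-\tau}\,\Phi'(t)$, the previous inequality becomes $\frac{d}{dt}\Phi(t)^{1-\tau} \le (1-\tau)\,\beta\,\phi(t)$ almost everywhere. Integrating from $0$ to $t$ and using $\Phi(0) = \alpha$ gives
\[
\Phi(t)^{1-\tau} \le \alpha^{1-\tau} + (1-\tau)\,\beta \int_0^t \phi(s)\, ds;
\]
raising both sides to the power $1/(1-\tau) > 0$ and recalling that $\xi \le \Phi$ produces exactly \eqref{eq:gronwall}.

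The only delicate point --- and the main obstacle --- is to justify the chain rule for $\Phi^{1-\tau}$ at points where $\Phi$ might vanish, because $s \mapsto s^{1-\tau}$ has unbounded derivative at the origin and so fails to be Lipschitz there. I would therefore first prove the estimate under the extra assumption $\alpha > 0$: then $\Phi(t) \ge \alpha > 0$ for all $t$, the map $s \mapsto s^{1-\tau}$ is Lipschitz on $[\alpha, +\infty)$, hence $\Phi^{1-\tau}$ is absolutely continuous as the composition of a Lipschitz map with an absolutely continuous function, and the chain rule used above holds almost everywhere. To remove this restriction I would, in the case $\alpha = 0$, apply the inequality just established with $\alpha$ replaced by an arbitrary $\eps > 0$ --- the hypothesis holds a fortiori with the larger constant --- obtaining
\[
\xi(t) \le \left( \eps^{1-\tau} + (1-\tau)\,\beta \int_0^t \phi(s)\, ds \right)^{1/(1-\tau)},
\]
and then let $\eps \downarrow 0$, the right-hand side being continuous in $\eps$, to recover \eqref{eq:gronwall}.
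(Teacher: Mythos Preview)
Your proof is correct and follows essentially the same route as the paper's: define the right-hand side as an absolutely continuous function $\Phi$ (the paper calls it $G$), use $\xi \le \Phi$ to get $\Phi' \le \beta\phi\,\Phi^\tau$, separate variables, and integrate. Your treatment is in fact more careful than the paper's, which writes $G^{-\tau}G' \le \beta\phi$ without addressing the possibility $G(t)=0$ when $\alpha=0$; your $\eps$-approximation argument cleanly closes that gap.
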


\begin{proof} The proof of this result is standard. Set 
\begin{equation}
G(t):= \alpha + \beta \int_0^t \phi(s) \xi (s)^\tau\,  ds, \quad \mbox{ for } t \in (0,T).
\end{equation}
Then  $ G^\prime(t) = \beta \phi(t) \xi(t)^\tau \le \beta \phi(t)G(t)^\tau$ for $t \in (0,T)$ and consequently
\begin{equation}
G(t)^{-\tau}G^\prime(t) \le \beta \phi(t),  \quad \mbox{ for } t \in (0,T).
\end{equation}
Integrating this with respect to $t$ and using the fact  $G(t) \ge \xi(t)$ for  $t \in (0,T)$ yield the conclusion. 
\end{proof}

\begin{remark} \rm In \cite{NguyenVogelius}, the authors used Lorentz's model to study approximate cloaking via a change of variables for the acoustic waves in the time domain.  Waves equations which are non-local in time  also appeared in a very different context in \cite{MinhLinh}, the one of  generalized impedance boundary conditions for conducting obstacles. The proof has some roots in these works. 
\end{remark}

\begin{remark} \label{rem-Unique} \rm Assume that \eqref{epsmu}, \eqref{assumption-lambda}, \eqref{eq:causality} and
\eqref{eq:passivity} hold. Let $u \in L^\infty(-\infty, + \infty; \HH)$ be a weak solution of 
\begin{equation}\label{eq-G}
M(x) \partial_t u(t,x) + (\Lambda * u)(t,x) = \mathbb A u(t,x) + f(t,x),\quad \mbox{ for } t\in R, \ x \in \R^3. 
\end{equation}
Note that the time convolution $*$ is considered here, and not the operator $\star$ defined by \eqref{eq:truncatedConvolutionProduct}. Assume that $f(t, \cdot) = 0$ for $t <  t_1$ and in addition that $ u \in L^1(-\infty, t_1; \VV)$ and 
\begin{equation}\label{minus-infinity}
\liminf_{t \to - \infty} \| u(t, ) \|_{\HH} = 0.
\end{equation}
Then $u(t, \cdot) = 0$ for $t < t_1$.  The definition of weak solutions for \eqref{eq-G} is similar to the one given in Definition \ref{defi:weakSolution}: $u$ is required to satisfy the following equation,  in the distributional sense, 
\begin{equation}\label{WS-G}
\frac{ d}{ dt} \ll M u(t, \cdot ),v \rr_\HH + \ll (\Lambda * u)(t, \cdot),v\rr_\HH   =  \ll u(t, \cdot), \mathbb A v \rr_\HH  + \ll f(t, \cdot),v \rr_\HH,\quad  \mbox{ in } (-\infty, \infty),
\end{equation}
for all $v \in \VV$. Indeed, we have 
\begin{equation}
\frac{1}{2} \frac{ d}{ dt} \ll M u(t, \cdot ),u(t, \cdot) \rr_\HH + \ll (\Lambda * u)(t, \cdot),u(t, \cdot)\rr_\HH   =  0,\quad  \mbox{ in } (-\infty, t_1). 
\end{equation}
This implies, by \eqref{minus-infinity},  
\begin{equation}
\frac{1}{2} \ll M u(t, \cdot ),u(t, \cdot) \rr_\HH + \int_{-\infty}^t \ll (\Lambda * u)(t, \cdot),u(t, \cdot)\rr_\HH   =  0,\quad  \mbox{ in } (-\infty, t_1). 
\end{equation}
Similar to \eqref{eq:weakPassivity}, we obtain, for $t < t_1$,  
\begin{equation}
 \int_{-\infty}^t  \ll (\Lambda * u)(s, \cdot),  u(s, \cdot) \rr_\HH  d s \ge 0. 
\end{equation}
Therefore,  $u(t, \cdot) = 0$ for $t < t_1$.  
\end{remark}

\subsection{Proof of Theorem \ref{thm:finitePropagationSpeed}}\label{sec:fps}

In the case where $u$ is {\it regular enough}, the argument is quite standard using the two observations \eqref{eq:weakPassivity} and  \eqref{integration-Auu}. To overcome the lack of the regularity of $u$,  we implement the strategy used in the proof of the uniqueness part of Theorem~\ref{thm:wellPosedness}. For simplicity of notations, we assume that $a = 0$ and we denote $c_{a, R}$ by $c$ in this proof.

Set
\begin{equation}
U(t, x) := \int_0^t u(s, x)\, ds, \quad \mbox{ for } t \in [0,T), \ x \in \R^3.
\end{equation}
Integrating \eqref{eq:defWeakSolution} from 0 to $t$ and using the fact that $u(t= 0, \cdot ) = u_0$, we obtain that, for all $v \in \VV$ and for almost every $t \in (0,T)$,
\begin{equation} 
\ll M u(t, \cdot), v \rr_\HH - \ll M u_0, v \rr_\HH +  \int_0^t \ll (\Lambda \star u)(s, \cdot),v \rr_\HH \,  ds = \ll  U(t, \cdot), \mathbb A v \rr_\HH + \ll F(t), v \rr_\HH, 
\end{equation}
where 
\begin{equation} 
F(t, \cdot) := \int_0^t f(s, \cdot) \, ds, \quad \mbox{ for } t \in [0,T).
\end{equation}
As in \eqref{thm1-p1}, we have
\begin{equation}
\int_0^t (\Lambda \star u)(s, \cdot) \,  ds = (\Lambda \star U)(t, \cdot), \quad \mbox{ for almost every } t \in [0,T).
\end{equation}
Since 
\begin{equation}\label{thm3-p1}
\partial_t U (t, \cdot ) = u(t, \cdot),\quad \mbox{ for almost every } t \in (0, T), 
\end{equation}
we derive,  for all $v \in \HH$, that in $(0,T)$
\begin{equation}
 \ll M \partial_t U(t, \cdot), v \rr_\HH +  \ll (\Lambda \star U)(s, \cdot) \,  ds,v \rr_\HH = \ll  U(t, \cdot ), \mathbb A v \rr_\HH + \ll F(t, \cdot),\phi_k \rr_\HH + \ll M u_0, v \rr_\HH.
\end{equation}
It follows that 
\begin{equation}\label{eq-U2}
M(x) \partial_t U (t, x) + (\Lambda \star U) (t, x)=  \mathbb A U (t, x) +  F(t, x) +  M u_0(x),\quad  \mbox{ for } t \in (0, T), \ x \in \R^3.  
\end{equation}
From \eqref{thm3-p1}, we obtain 
\begin{equation}\label{thm3-p2}
 U \in L^1(0, T; \VV). 
\end{equation}
We claim
\begin{equation}\label{claim-thm2}
U(t, \cdot) = 0,\quad   \mbox{ in } B_{R - ct} \mbox{ and for } t \in (0, R/ c). 
\end{equation}
Since $u_0 = 0$ in $B_R$, it is clear that the conclusion follows from claim~\eqref{claim-thm2} and the definition of $U$. 

It remains to prove \eqref{claim-thm2}. 
Multiplying the equation of $U$ \eqref{eq-U2} by $U(t, x)$, integrating with respect to $x$ in $B_{R - c t}$,  and using the facts 
$u_0 = 0$ in $B_{R - ct}$ and $F(t, \cdot) = 0$ in $B_{R - ct}$ for almost every $t \in (0, R/ c)$, 
we have, for almost every $t \in (0, R/ c)$,
\begin{multline}\label{eq:tmpFPS0}
\int_{B_{R- c t}}  M \partial_t U(t, x ) \cdot U(t, x) \, d x 
+ \int_{B_{R- c t}}  (\Lambda \star U)(t, x) \cdot U(t, x) \, dx \\
 = \int_{B_{R- c t}}  \mathbb A U(t, x) \cdot U(t, x) \, dx. 
\end{multline}
The divergence theorem gives,  for almost every $t \in (0, R/ c)$,
\begin{multline}
\frac{1}{2} \frac{ d}{ dt} \int_{B_{R- c t}}  M U(t, x) \cdot U(t, x) \, dx \\ = \int_{B_{R- c t}}  M  \partial_t U(t, x) \cdot U(t, x) \, dx  - \frac{c}{2} \int_{\partial B_{R- c t}}  M U(t, x) \cdot U(t, x) \, dx.  
\end{multline}
It follows from \eqref{eq:tmpFPS0} that, for almost every $t \in (0, R/ c)$,
\begin{multline}
\frac{1}{2} \frac{ d}{ dt} \int_{B_{R- c t}}  M U(t, x) \cdot U(t, x) \, dx + \int_{B_{R- c t}}  (\Lambda \star U)(t, x) \cdot u(t, x) \, dx  \\
 = - \frac{c}{2} \int_{\partial B_{R- c t}}  M U(t, x) \cdot U(t, x) \, dx   + \int_{B_{R- c t}}  \mathbb A U(t, x) \cdot U(t, x) \, dx. 
\end{multline}
Integrating this identity from $0$ to $t$ with $ t  \in (0,R/ c)$ and using the fact that $U(0, \cdot) = 0$, we obtain, for almost every $t \in (0, R/ c)$,
\begin{multline}\label{to0}
\frac{1}{2}\int_{B_{R- c t}}   M U(t, x) \cdot U(t, x) \, dx +  \int_0^t \int_{B_{R- c s}}  (\Lambda \star U)(s, x) \cdot U(s, x) \, dx \, ds  \\
 =  - \frac{c}{2}  \int_0^t \int_{\partial B_{R- c s}}  M U(s, x) \cdot U(s, x) \, dx \, ds    +  \int_0^t \int_{B_{R- c s}}  \mathbb A U(s, x) \cdot U(s, x) \, dx \, ds. 
\end{multline}
Similar to \eqref{eq:weakPassivity}, we have 
\begin{equation}\label{to1}
\int_0^t \int_{B_{R- c s}}  (\Lambda \star U)(s, x) \cdot U(s, x) \, dx \, ds \ge 0, \quad \mbox{ for almost every } t \in (0, R/ c).
\end{equation}
Combining \eqref{to0} and \eqref{to1} yields, for almost every $t \in (0, R/ c)$,
\begin{multline}\label{to2}
\frac{1}{2}\int_{B_{R- c t}}   M U(t, x) \cdot U(t, x) \, dx \le  - \frac{c}{2}  \int_0^t \int_{\partial B_{R- c s}}  M U(s, x) \cdot U(s, x) \, dx \, ds   \\ 
 + \int_0^t \int_{B_{R- c s}}  \mathbb A U(s, x) \cdot U(s, x) \, dx \, ds .  
\end{multline}
We claim that, for $0< s < R/ c$, one has
\begin{equation}\label{eq:claim}
 - \frac{c}{2} \int_{\partial B_{R- c s}}  M U(s, x) \cdot U(s, x) \, dx    + \int_{B_{R- c s}}   \mathbb A U(s, x) \cdot U(s, x) \, dx \le 0. 
\end{equation}
Indeed, for $U = (E,H)^ T$,  one has
\begin{align*}
\int_{B_{R- c s}}  \mathbb A U(s, x) \cdot U(s, x) \, dx 
& = \int_{B(0,R-cs)} \left[ \curl H(s,x) \cdot E(s,x)- \curl E(s,x) \cdot H(s,x) \right]  dx \\
&  = - \int_{\partial B_{R-cs}} \big(H(s,x) \times e_r \big) \cdot E(s,x)\,  d x \le   \int_{\partial B_{R-cs}} |H| |E| \,  d x 
\end{align*}
and 
\begin{equation}
M U(s, x) \cdot U(s, x)  = \eps_{rel} E \cdot  E + \mu_{rel} H \cdot H  \ge 2 |\eps_{rel}^{1/2} E| |\mu_{rel}^{1/2} H|. 
\end{equation}
Assertion~\eqref{eq:claim} now follows from the definition \eqref{eq:defSpeed} of $c = c_{a, R}$.

We derive from \eqref{to2} and \eqref{eq:claim} that 
\begin{equation}
\frac{1}{2} \int_{B_{R- c t}}   M U(t, x) \cdot U(t, x) \, dx \le 0,  \quad 
\mbox{ for almost every } t \in (0, R/ c),
\end{equation}
and claim~\eqref{claim-thm2} follows from the ellipticity of $M$. \qed

\subsection{Proof of Theorem~\ref{thm-reg}}\label{sect-reg} In this proof, we use the notations from the one of Theorem~\ref{thm:wellPosedness}. For $n \in \N^*$, set 
\begin{equation*}
v_n(t, x): = \partial_t u_n(t, x), \quad  \mbox{ for } t \in [0, T), \ x \in \R^3. 
\end{equation*}
We recall that $u_n$ is the approximate solution constructed by the Galerkin approach in the proof of Theorem \ref{thm:wellPosedness}. It follows from \eqref{eq:defun} that 
\begin{equation}\label{form-vn}
v_n(t, x) =  \sum_{k=1}^n d_{n,k}'(t) \phi_k(x), \quad  \mbox{ for } t \in [0, T), \ x \in \R^3. 
\end{equation}
Differentiating \eqref{eq:approximateWeakSolution} with respect to $t$, we have 
\begin{equation}\label{vn-11}
 \frac{ d}{ dt} \ll M v_n(t, \cdot),\phi_k \rr_\HH + \ll (\Lambda \star v_n)(t, \cdot),\phi_k \rr_\HH = \ll v_n(t, \cdot), \mathbb A\, \phi_k \rr_\HH +\ll g_n(t, \cdot),\phi_k \rr_\HH, , \quad  \mbox{ in } (0,T),
\end{equation}
where 
\begin{equation}\label{def-gn}
g_n(t, x) := \partial_t f(t, x) - \Lambda(t, x) u_{0, n} (x), \quad  \mbox{ for } t \in (0, T), \ x \in \R^3. 
\end{equation}
We have 
\begin{equation}\label{form-vn-1}
 \ll M \partial_t u_n(0, \cdot),\phi_k \rr_\HH = \ll u_n(0, \cdot), \mathbb A\, \phi_k \rr_\HH +\ll f(0, \cdot),\phi_k \rr_\HH, \quad  \mbox{ for }k \in \{1, \cdots, n\}.
\end{equation}
It follows from \eqref{form-vn-1} that $
M^{1/2} \partial_t u_n (0, \cdot)$ is the projection of  $M^{-1/2} \big(\mathbb A u_0  + f(0, \cdot) \big)$
into the space spanned by  $\{ M^{1/2}\phi_1, \cdots, M^{1/2}\phi_n \}$ in  $\HH$. 
This implies 
\begin{equation}
\| v_n(0, \cdot) \|_{\HH} = \| \partial_t u_n(0, \cdot) \|_{\HH} \le C \Big( \| u_0 \|_\VV + \| f(0, \cdot) \|_\HH \Big). 
\end{equation}
As in \eqref{eq:approximateAPrioriEstimate}, we derive from \eqref{vn-11} that 
\begin{equation}
\| v_n(t, \cdot)\|_\HH^2  \le  C \left(  \| u_0 \|_\VV + \| f(0, \cdot) \|_\HH + \int_0^t \|\partial_s f (s, \cdot) \|_{\HH} + \| \Lambda(s, \cdot)\|_{L^\infty} \| u_n(s. \cdot) \|_{\HH} \, ds \right)^2 \quad \mbox{ in } (0, T). 
\end{equation}
This in turn yields
\begin{equation}
\| v(t, \cdot)\|_\HH^2  \le  C \left(  \| u_0 \|_\VV + \| f(0, \cdot) \|_\HH + \int_0^t \|\partial_s f (s, \cdot) \|_{\HH} + \| \Lambda(s, \cdot)\|_{L^\infty} \| u(s. \cdot) \|_{\HH}  \, ds \right)^2 \quad \mbox{ in } (0, T), 
\end{equation}
and the conclusion follows. \qed

\section{Numerical results}\label{sect:num}

We now perform some numerical simulations. In this section we focus on the Drude's model without absorption described at the end of Section \ref{sect:model}. More precisely we consider $\eps_{rel} = \mu_{rel} = 1$,  $\widehat{\lambda_{em}} = \widehat{\lambda_{me}} = 0$,
\begin{equation} 
\widehat{\lambda_{ee}}(\omega,x) =   \frac{\oomega_e^2(x)}{-i \omega} \quad \text{and} \quad 
\widehat{\lambda_{mm}}(\omega,x) = \frac{\oomega_m^2(x)}{-i \omega}, \qquad \mbox{ for } \omega \in \R,\ x \in \R^3,
\end{equation}
or equivalently
\begin{equation} 
\lambda_{ee}(t,x) = \oomega_e^2(x)\theta(t) \quad \text{and} \quad 
\lambda_{mm}(t,x) = \oomega_m^2(x)\theta(t),  \qquad \mbox{ for } t \in \R,\ x \in \R^3,
\end{equation}
where $\oomega_e$ and $\oomega_m$ are two functions defined later. 

In this context, the problem \eqref{eq:problem0} rewrites
\begin{equation}\label{eq:generalMaxwellEquationsDrude}
\left\{ \begin{aligned}
&  \partial_t E(t, x) + \oomega_e^2(x)\int_0^t E(s,x)\, ds  =  \curl H(t, x) +f_e(t,x),  \\[6pt]
&   \partial_t H(t, x)+  \oomega_m^2(x)\int_0^t H(s,x)\, ds = -\curl E(t, x) +f_m(t,x), \\[6pt]
& E(t=0,\cdot) = E_0, \qquad H(t=0,\cdot) = H_0. 
\end{aligned}\right. \qquad \mbox{ for } t > 0,\ x \in \R^3. 
\end{equation}

%For simplicity, we will consider a 2d setting: we chose arbitrary that $E$ is a scalar field and $H=(H_x,H_y)^\mathrm T$ is a 2d vector field. Thus the curl operator in \eqref{eq:generalMaxwellEquations} becomes 
%\begin{equation}
%\curl E = (\partial y E,-\partial_x E)^\mathrm T \quad \text{and} \quad
% \curl H = \partial_x H_y - \partial_y H_x.
%\end{equation}
%We consider a bounded obstacle filled with a medium filled by Drude's model described at the end of Section \ref{sect:model} surrounded by vacuum (see Figure \ref{fig:geometry}).

%More precisely we consider $\eps_{rel} = \mu_{rel} = 1$,  $\widehat{\lambda_{em}} = \widehat{\lambda_{me}} = 0$,
%\begin{equation} \label{eq:drude}
%\widehat{\lambda_{ee}}(\omega,x,y) =   \frac{\omega_e^2(x,y)}{-i \omega} \quad \text{and} \quad 
%\widehat{\lambda_{mm}}(\omega,x,y) = \frac{\omega_m^2(x,y)}{-i \omega},
%\end{equation}
%where $\omega_e$ and $\omega_m$ are two functions that are equal to positive constants (still denoted $\omega_e$ et $\omega_m$) in the obstacle and vanish outside (see Figure \ref{fig:geometry}). 

%(see \cite{cassier2017,GralakTip,Joly16,vinoles} for similar approach). In other words we can deal with a problem that is local in time instead of a non-local one.
Define
\begin{equation}\label{eq:defAuxillaryFields}
J(t,x) := \int_0^t E(s,x)\, ds \quad \text{and} \quad
K(t,x) := \int_0^t H(s,x) \, ds, \qquad \mbox{ for } t \ge 0,\ x \in \R^3.
\end{equation}
It is clear that $J(t=0,\cdot) = K(t=0,\cdot) = 0$ and that
\begin{equation}
\partial_t J(t,x) = E(t,x) \quad \text{and} \quad
\partial_t K(t,x) = H(t,x), \qquad \mbox{ for } t \ge 0,\ x \in \R^3.
\end{equation}
From \eqref{eq:generalMaxwellEquationsDrude}, one obtains the following local in time problem which  is the advantage of the special structure of Drude's model:
\begin{equation}\label{eq:generalMaxwellEquationsDrude2}
\left\{ \begin{aligned}
&  \partial_t E(t, x) + \oomega_e^2(x)J(t,x)  =  \curl H(t, x) + f_e(t, x), \\
&   \partial_t H(t, x)+  \oomega_m^2(x)K(t,x) = -\curl E(t, x) + f_m (t, x), \\
& \partial_t J(t,x) = E(t,x), \\
& \partial_t K(t,x) = H(t,x), \\
& E(t=0,\cdot) = E_0, \qquad J(t=0,\cdot) = 0, \\
& H(t=0,\cdot) = H_0, \qquad K(t=0,\cdot) = 0. 
\end{aligned}\right. \qquad \mbox{ for }t > 0,\ x \in \R^3. 
\end{equation}
%The existence of a solution of \eqref{eq:generalMaxwellEquationsDrude2} follows from  Theorem \ref{thm:wellPosedness}. One can also check the  uniqueness of  \eqref{eq:generalMaxwellEquationsDrude2} by, for instance,  considering the energy
%\begin{equation}
%\mathcal E(t) := \int_{\R^3} \left( \vert E(t,x) \vert^2 +  \vert H(t,x) \vert^2 + \omega_e^2(x) \vert J(t,x) \vert^2 + \omega_m^2(x) \vert K(t,x) \vert^2 \right) dx, \quad  \mbox{ for } t \ge 0.
%\end{equation}
%
%\begin{remark} \rm
%It is possible to directly show an existence result for  \eqref{eq:generalMaxwellEquationsDrude2} using the same steps as in the proof of Theorem \ref{thm:wellPosedness}. The details are left to the reader.
%\end{remark}

We are interested in simulations on  \eqref{eq:generalMaxwellEquationsDrude2} in the 2d setting for simplicity. We thus  consider the case in which $(E_0,H_0)$, $(f_e,f_m)$,  and $(\oomega_e,\oomega_m)$ do not depend on the third variable $x_3$ in space (here $x = (x^\prime,x_3) \in \R^3$ with $x^\prime =(x_1,x_2) \in \R^2$). One can show that the four fields $E$, $H$, $J$ and $K$ are also independent of $x_3$ and that  one has the two decoupled systems respectively called \emph{transverse-electric} and \emph{transverse-magnetic} modes. Here we focus on the transverse-electric modes, which are given as follows, for $t > 0$ and $x^\prime =(x_1,x_2) \in \R^2$:
%\begin{equation}
%\left\{ \begin{aligned}
%&  \partial_t E_1(t, x_1,x_2) + \omega_e^2(x)J_1(t, x_1,x_2)  =  \partial_{x_2} H_3(t,x_1,x_2), \\
%&  \partial_t E_2(t, x_1,x_2) + \omega_e^2(x)J_2(t, x_1,x_2)  =  \partial_{x_1} H_3(t,x_1,x_2), \\
%&   \partial_t H_3(t, x)+  \omega_m^2(x)K_3(t,x) = \partial_{x_2} E_1(t,x_1,x_2) - \partial_{x_1} E_2(t,x_1,x_2), \\
%& \partial_t J_1(t,x_1,x_2) = E_1(t,x_1,x_2), \\
%& \partial_t J_2(t,x_1,x_2) = E_2(t,x_1,x_2), \\
%& \partial_t K_3(t,x_1,x_2) = H_3(t,x_1,x_2), \\
%& E_1(t=0,\cdot) = E_{0,1}, \qquad J_1(t=0,\cdot) = 0, \\
%& E_2(t=0,\cdot) = E_{0,2}, \qquad J_2(t=0,\cdot) = 0, \\
%& H_3(t=0,\cdot) = H_{0,3}, \qquad K_3(t=0,\cdot) = 0. 
%\end{aligned}\right.
%\end{equation}
%and
\begin{equation}\label{eq:TM}
\left\{ \begin{aligned}
&   \partial_t E_3(t,x^\prime)+  \oomega_e^2(x_1,x_2)J_3(t,x^\prime) = \partial_{x_1} H_2(t,x^\prime) - \partial_{x_2} H_1(t,x^\prime) +f_{e,3}(t,x^\prime) , \\[6pt]
&  \partial_t H_1(t,x^\prime) + \oomega_m^2(x^\prime)K_1(t,x^\prime)  =  -\partial_{x_2} E_3(t,x^\prime) + f_{m,1}(t,x^\prime), \\[6pt]
&  \partial_t H_2(t,x^\prime) + \oomega_m^2(x^\prime)K_2(t,x^\prime)  =  \partial_{x_1} E_3(t,x^\prime)+ f_{m,2}(t,x^\prime), \\[6pt]
& \partial_t J_3(t,x^\prime) = E_3(t,x^\prime),  \; \partial_t K_1(t,x^\prime) = H_1(t,x^\prime),\;   \partial_t K_2(t,x^\prime) = H_2(t,x^\prime), \\[6pt]
& E_3(t=0,\cdot) = E_{0,3}, \;  H_1(t=0,\cdot) = H_{0,1}, \;  H_2(t=0,\cdot) = H_{0,2},\\[6pt]
& J_3(t=0,\cdot) = 0,  \; K_1(t=0,\cdot) = 0,  \;  K_2(t=0,\cdot) = 0.
\end{aligned}\right. 
\end{equation}
%Here we use the notation $X =(X_1,X_2,X_3)^\mathrm T$ for any  vector field $X \in \R^3$.
% Again, the existence for \eqref{eq:TM} is a direct consequence of Theorem \ref{thm:wellPosedness} (or the existence result for \eqref{eq:generalMaxwellEquationsDrude2}). Uniqueness can be derived from standard energy estimates in which the energy is given by 
%\begin{multline}
%\mathcal E(t) := \int_{\R^2} \left( \vert E_3(t,x^\prime) \vert^2 +  \vert H_1(t,x^\prime) \vert^2 + \vert H_2(t,x^\prime) \vert^2  + \omega_e^2(x_1,x_2) \vert J_3(t,x^\prime) \vert^2 \right. \\ \left. + \omega_m^2(x_1,x_2) \vert K_1(t,x^\prime) \vert^2 +  \omega_m^2(x_1,x_2) \vert K_2(t,x^\prime) \vert^2 \right) d(x_1,x_2), \quad \mbox{ for } t \ge 0.
%\end{multline}

The setting for the simulation is the following.  The medium consists of  a bounded rectangular obstacle filled with a Drude's material  with positive constant  $\omega_e$ and $\omega_m$  that is surrounded by vacuum, i.e., $(\oomega_e, \oomega_m) = (\omega_e, \omega_m)$ inside the rectangle and $(0, 0)$ otherwise
(see Figure \ref{fig:geometry}).  We impose zero initial conditions for the electric and the magnetic fields: 
\begin{equation}
E_3(t=0,\cdot) =  H_1(t=0,\cdot) = H_2(t=0,\cdot) = 0,
\end{equation}
and zero magnetic sources:
\begin{equation}
f_{m,1} = f_{m,2} = 0.
\end{equation}
We choose
\begin{equation}\label{eq:source}
f_{e,3}(t,x_1,x_2) = \sin(\omega_* t) g(x_1,x_2), \quad \mbox{ for } t >0,\ (x_1,x_2) \in \R^2,
\end{equation}
where $g$ is a Gaussian given by 
\begin{equation}
g(x_1,x_2) = e^{-25(x_1+10)^2-25x_2^2}, \quad \mbox{ for }  (x_1,x_2) \in \R^2.
\end{equation}
%With this choice, one can expect that for long time, the ``effective'' permittivity and permeability $\widehat{\eps}$ and $\widehat{\mu}$ in the obstacle for long time $t$ will be given by the Drude's model, that is to say
%\begin{equation}
%\widehat{\eps}(\omega_*) \simeq 1- \frac{\omega_e^2}{\omega_*^2} 
%\quad \text{and} \quad
%\widehat{\mu}(\omega_*) \simeq 1- \frac{\omega_m^2}{\omega_*^2}. 
%\end{equation}
By selecting appropriately $\omega_e$, $\omega_m$ and $\omega_*$,  the obstacle can have a negative permittivity, a negative permeability or even both.

Concerning the numerical methods, we use classical PMLs to artificially bound the computational domain and for the numerical scheme we use $P^1$-$P^0$ mixed finite elements (with mass lumping for efficiency) for the space discretization and centred finite difference approximations on staggered grids for the time discretization. The computations were done with FreeFem++ \cite{hecht}. We refer to \cite{vinoles} for more details about these numerical methods.

%\begin{remark} \rm
%Here we can use classical PMLs because they lie in regions filled with vacuum ($\omega_e = \omega_m = 0$). But classical PMLs are generally unstable when used in regions filled with dispersive media. We refer to \cite{BJV,vinoles} for more details and remedies.
%\end{remark}

We perform  three numerical experiments.
\begin{itemize}
\item  In the first one, we take $\omega_* = 5$, $\omega_e = 4$ and $\omega_m = 2$. With this choice we have
\begin{equation*}
\widehat{\eps}(\omega_*) \simeq  0.36 > 0 
\quad \text{and} \quad
\widehat{\mu}(\omega_*) \simeq 0.84 > 0. 
\end{equation*}
Here, the ``effective'' permittivity and permeability are both positive. Figure \ref{fig:expe1} shows some snapshots of $E_3$ at different times. One can see that there is propagation inside the obstacle, but with different speed (and consequently wavelength). This is due to dispersion.

\item In the second simulation, we take $\omega_* = 5$, $\omega_e = 6$ and $\omega_m = 2$. With this choice we have
\begin{equation*}
\widehat{\eps}(\omega_*) \simeq -0.44 < 0 
\quad \text{and} \quad
\widehat{\mu}(\omega_*)  \simeq 0.84 > 0. 
\end{equation*}
Here, the ``effective'' permittivity and permeability are of opposite signs. Figure \ref{fig:expe2} shows some snapshots of $E_3$ at different times. One can see that there is no propagation inside the obstacle: the field is exponentially decaying (after the transient wave has passed).
\item In the third simulation, we take $\omega_* = 5$, $\omega_e = 5 \sqrt 2$ and $\omega_m = 5 \sqrt 2$. With this choice we have
\begin{equation*}
\widehat{\eps}(\omega_*) \simeq -1 < 0 
\quad \text{and} \quad
\widehat{\mu}(\omega_*) \simeq  -1 < 0. 
\end{equation*}
Here, the ``effective'' permittivity and permeability are both negative. Figure \ref{fig:expe3} shows some snapshots of $E_3$ at different times. There is propagation inside the obstacle. The field focuses inside the obstacle and re-focuses symmetrically to the source outside the obstacle on the right. 
\end{itemize}

\begin{figure}[!h]
\begin{tikzpicture}
\draw (-1,-2) -- (1,-2) -- (1,-2) -- (1,2) -- (-1,2) -- cycle;
\draw (-3.5,-3.5) -- (3.5,-3.5) -- (3.5,-3.5) -- (3.5,3.5) -- (-3.5,3.5) -- cycle;
\draw[dashed] (-3.5,-3) -- (3.5,-3);
\draw[dashed] (-3.5,3) -- (3.5,3);
\draw[dashed] (3,-3.5) -- (3,3.5);
\draw[dashed] (-3,-3.5) -- (-3,3.5);
\draw (0,0) node[align=center]{obstacle \\ $\omega_e>0$ \\ $\omega_m > 0$};
\draw (0,2.5) node{vacuum $\omega_e = \omega_m = 0$};
\draw (0,3.25) node{PML};
\draw (0,-3.25) node{PML};
\draw (-3.25,0) node[align=center]{P \\ M \\ L};
\draw (3.25,0) node[align=center]{P \\ M \\ L};
\draw[<->] (-1,-2.25) -- (1,-2.25);
\draw (0,-2.25) node[below]{10};
\draw[<->] (1.25,-2) -- (1.25,2);
\draw (1.25,0) node[right]{20};
\draw[<->] (-3.5,-3.75) -- (-3,-3.75);
\draw[<->] (-3,-3.75) -- (3,-3.75);
\draw[<->] (3,-3.75) --(3.5,-3.75);
\draw (-3.25,-3.75) node[below]{3};
\draw (0,-3.75) node[below]{30};
\draw (3.25,-3.75) node[below]{3};
\draw[<->] (3.75,-3.5) -- (3.75,-3);
\draw[<->] (3.75,-3) -- (3.75,3);
\draw[<->] (3.75,3) -- (3.75,3.5);
\draw (3.75,3.25) node[right]{3};
\draw (3.75,0) node[right]{30};
\draw (3.75,-3.25) node[right]{3};
\draw (-2,0) circle (0.7);
\draw (-2,0) node[align=center]{source \\ $f_{e,3}$};
\end{tikzpicture}
\caption{Geometry of the problem \eqref{eq:TM}}
\label{fig:geometry}
\end{figure}
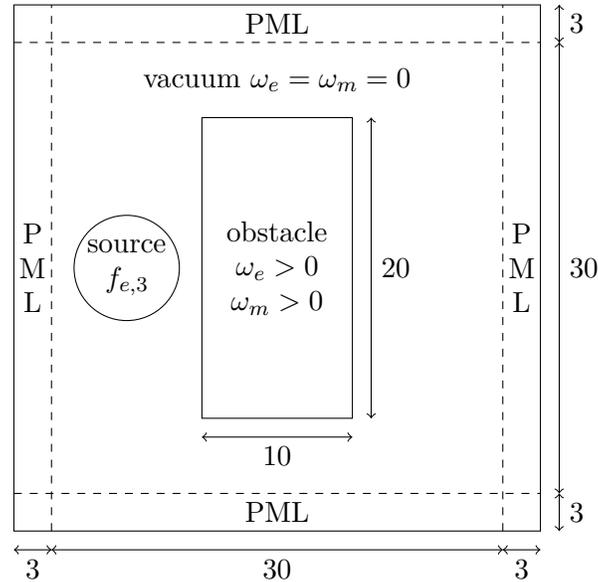

\begin{figure}[!h]
\caption{Snapshots of $E_3$ at different times for the first experiment.}

\includegraphics[width=0.49\textwidth]{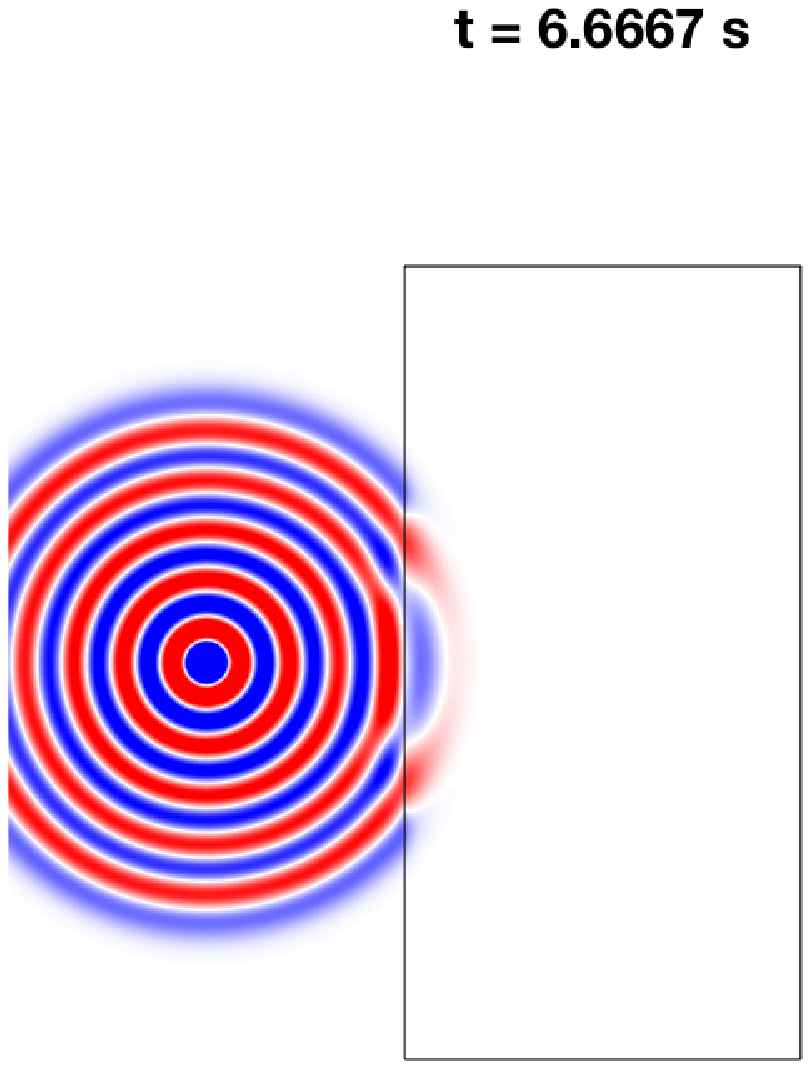}
\hfill
\includegraphics[width=0.49\textwidth]{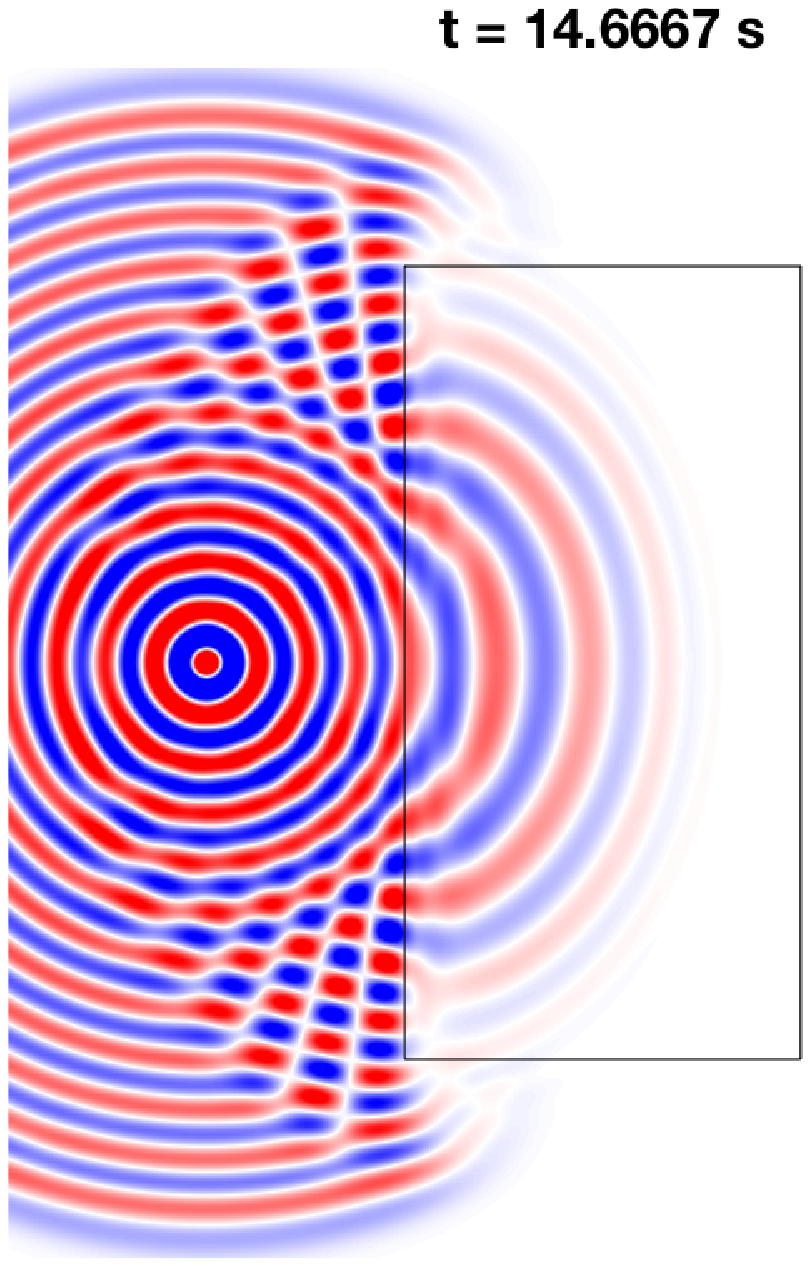}

\includegraphics[width=0.49\textwidth]{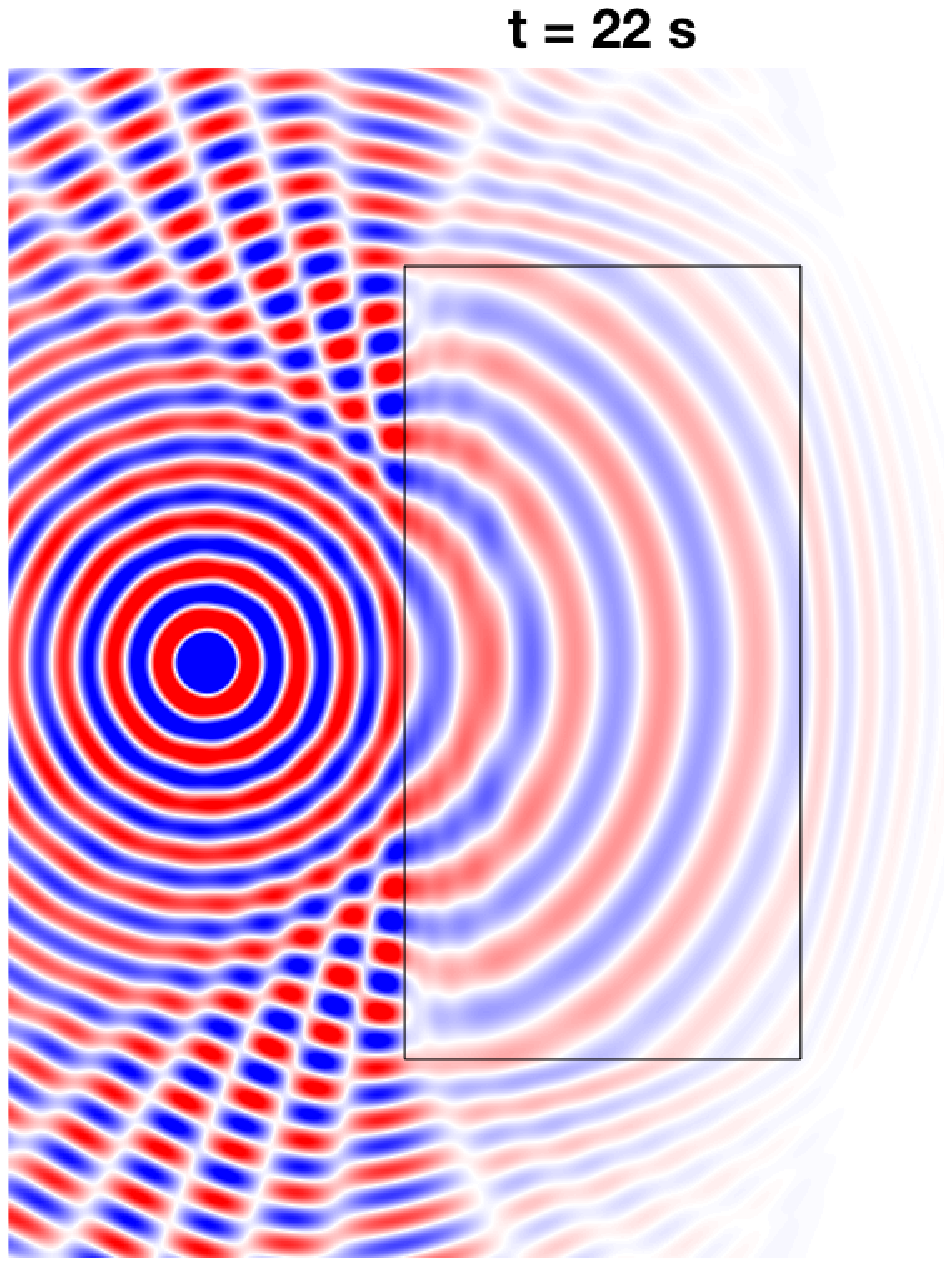}
\hfill
\includegraphics[width=0.49\textwidth]{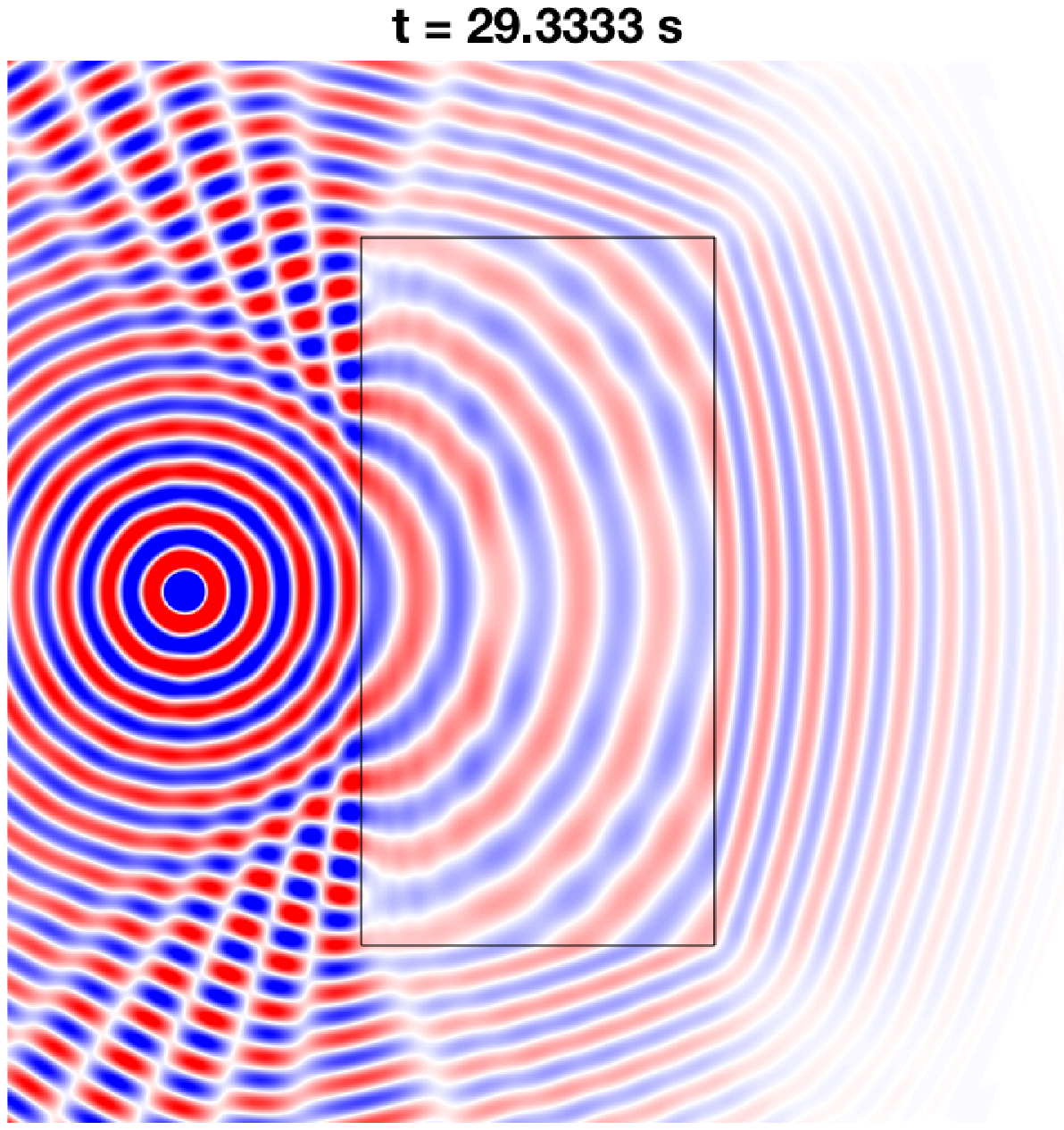}

\includegraphics[width=0.49\textwidth]{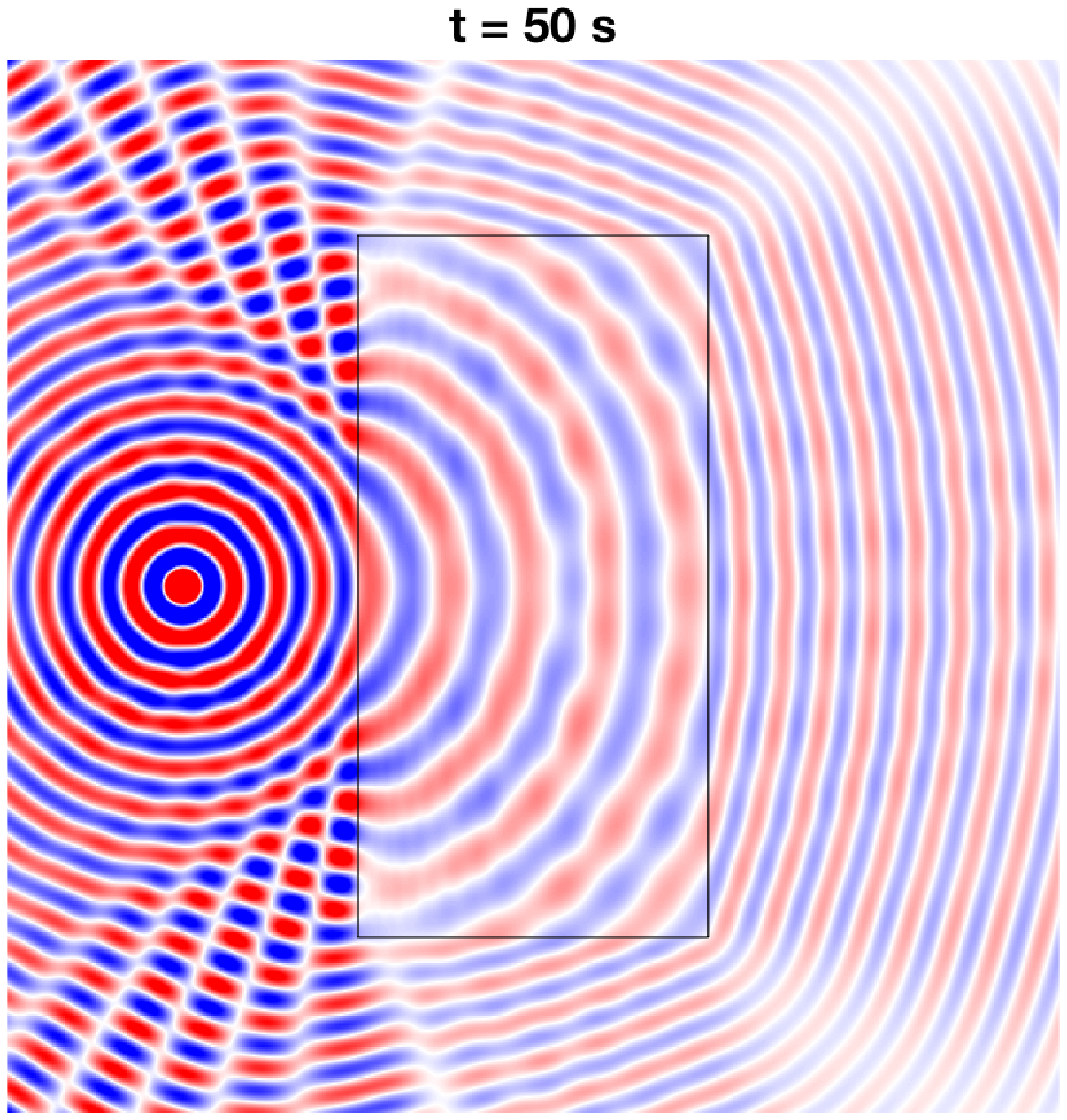}
\hfill
\includegraphics[width=0.49\textwidth]{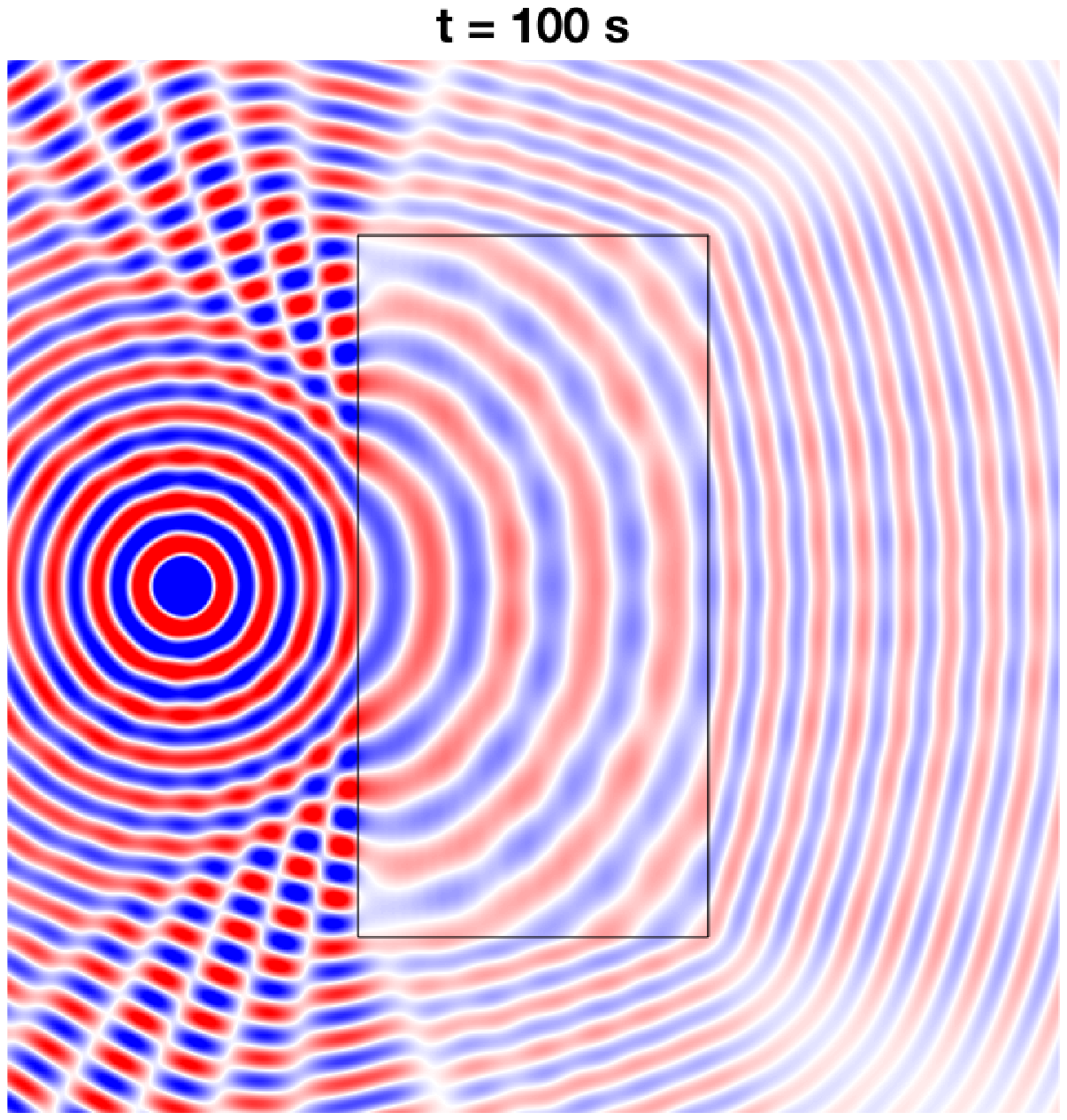}

\label{fig:expe1}
\end{figure}

\begin{figure}[!h]
\caption{Snapshots of $E_3$ at different times for the second experiment.}

\includegraphics[width=0.49\textwidth]{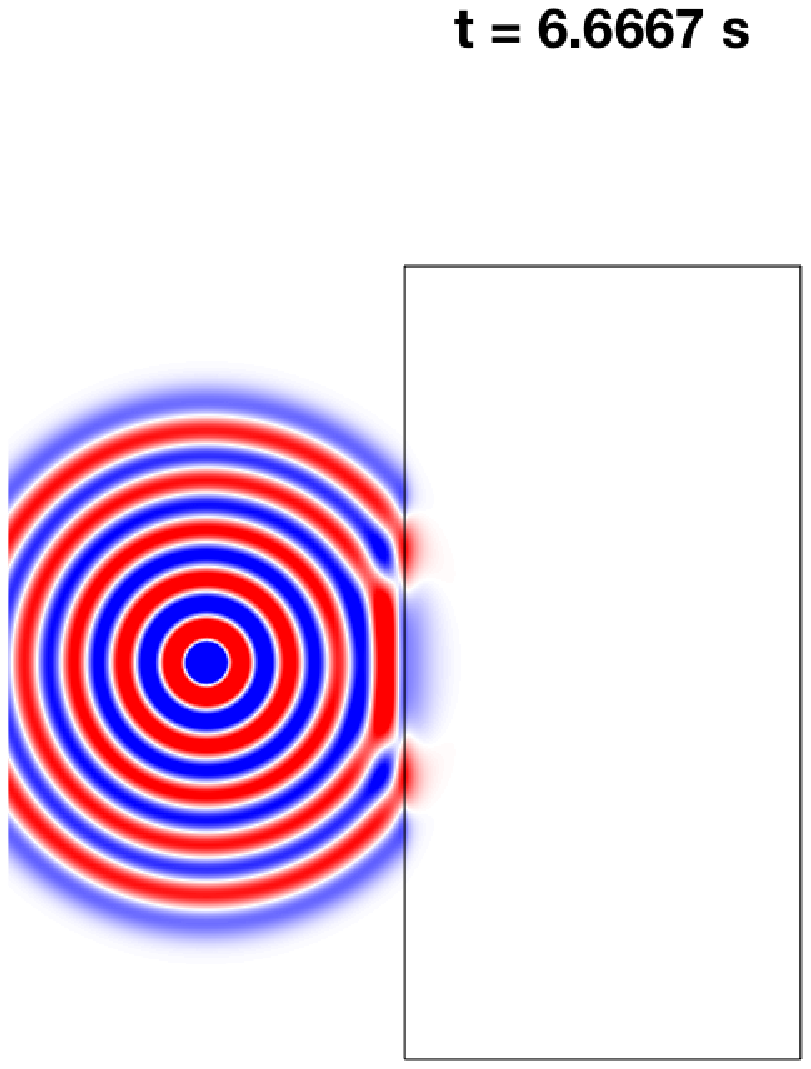}
\hfill
\includegraphics[width=0.49\textwidth]{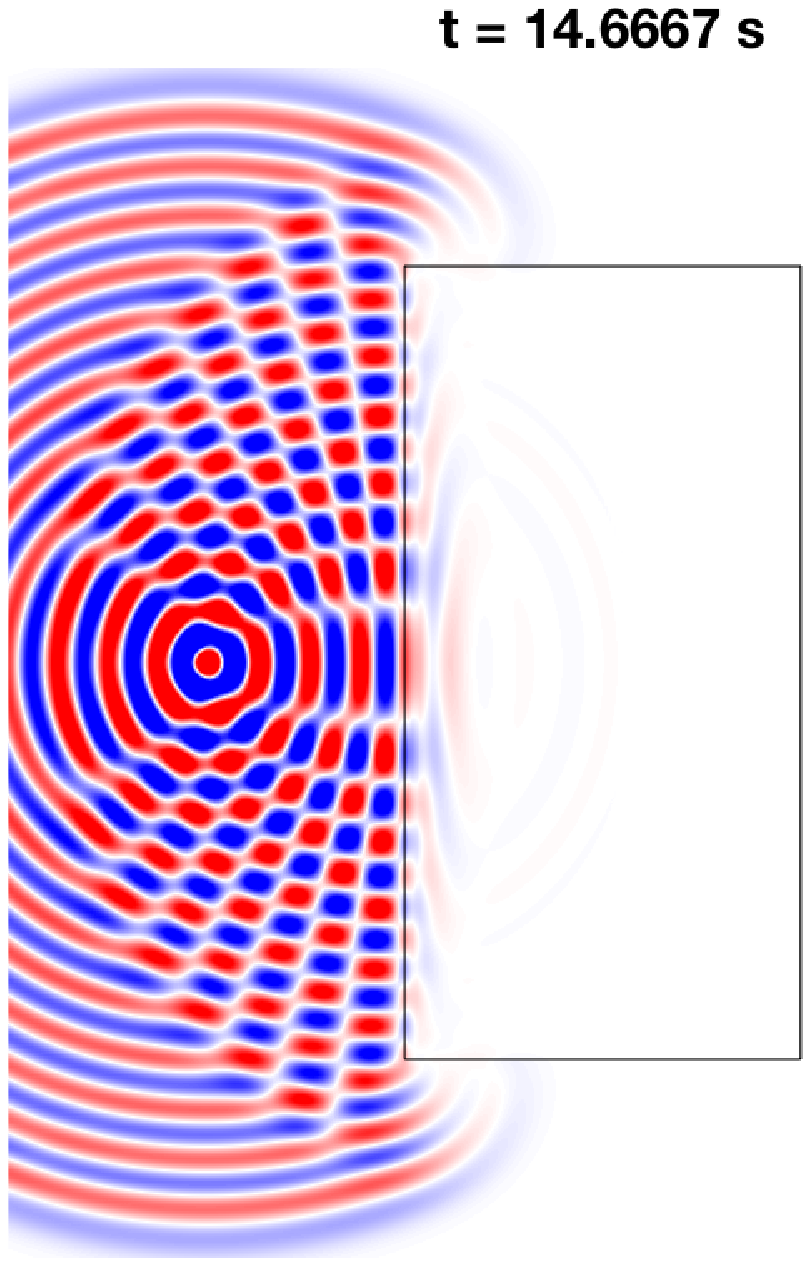}

\includegraphics[width=0.49\textwidth]{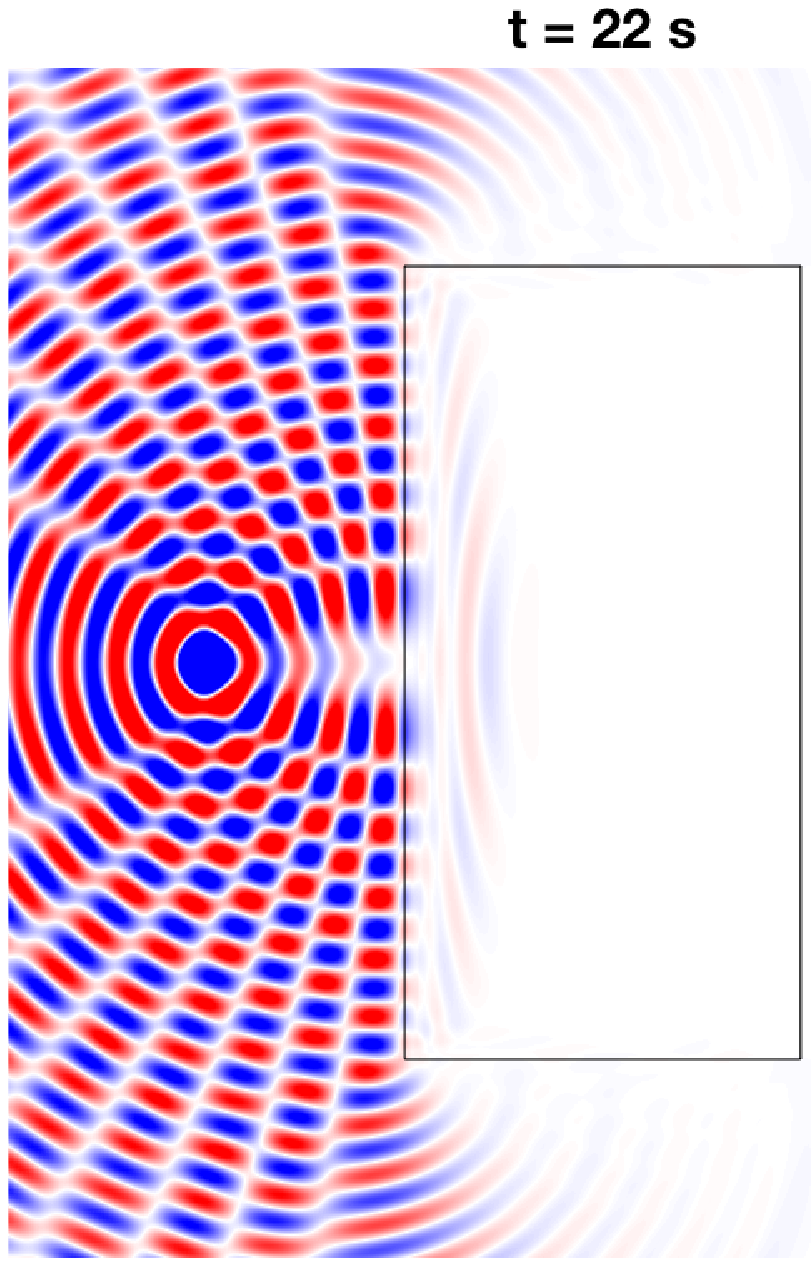}
\hfill
\includegraphics[width=0.49\textwidth]{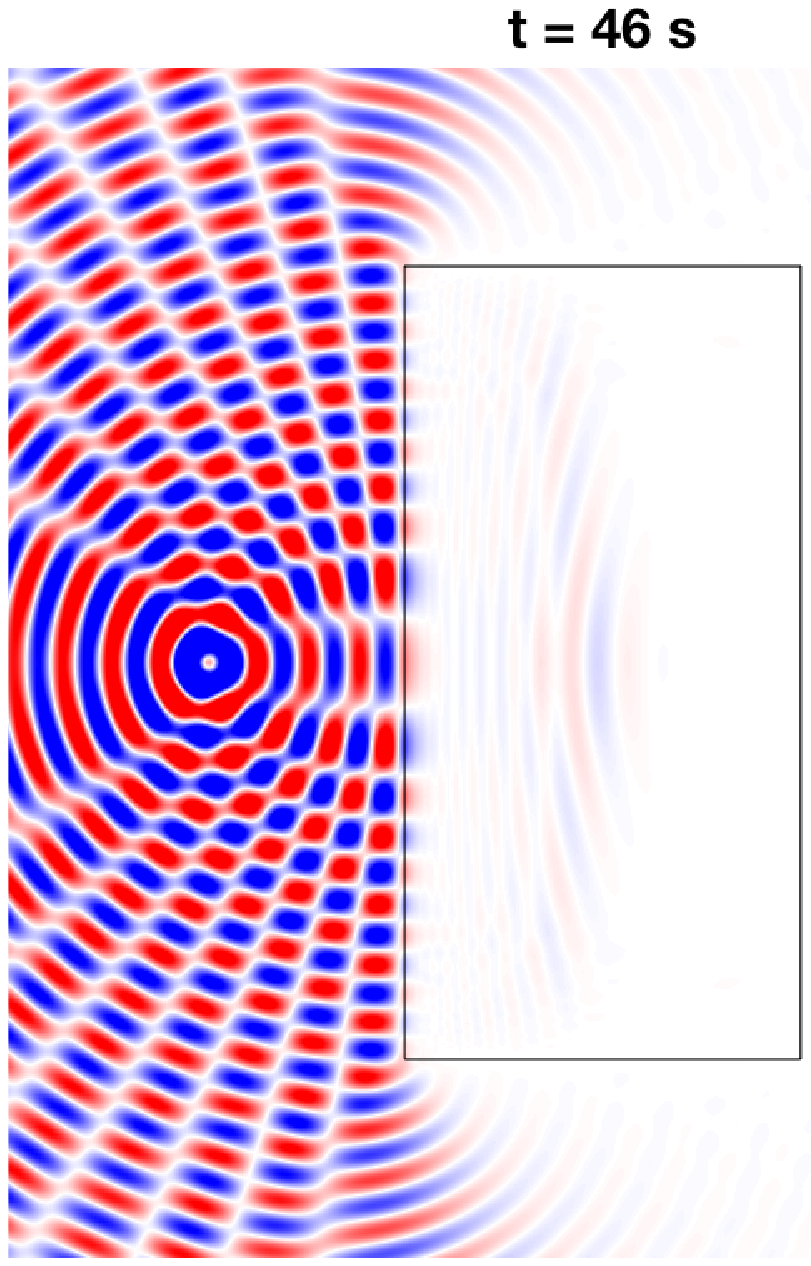}

\includegraphics[width=0.49\textwidth]{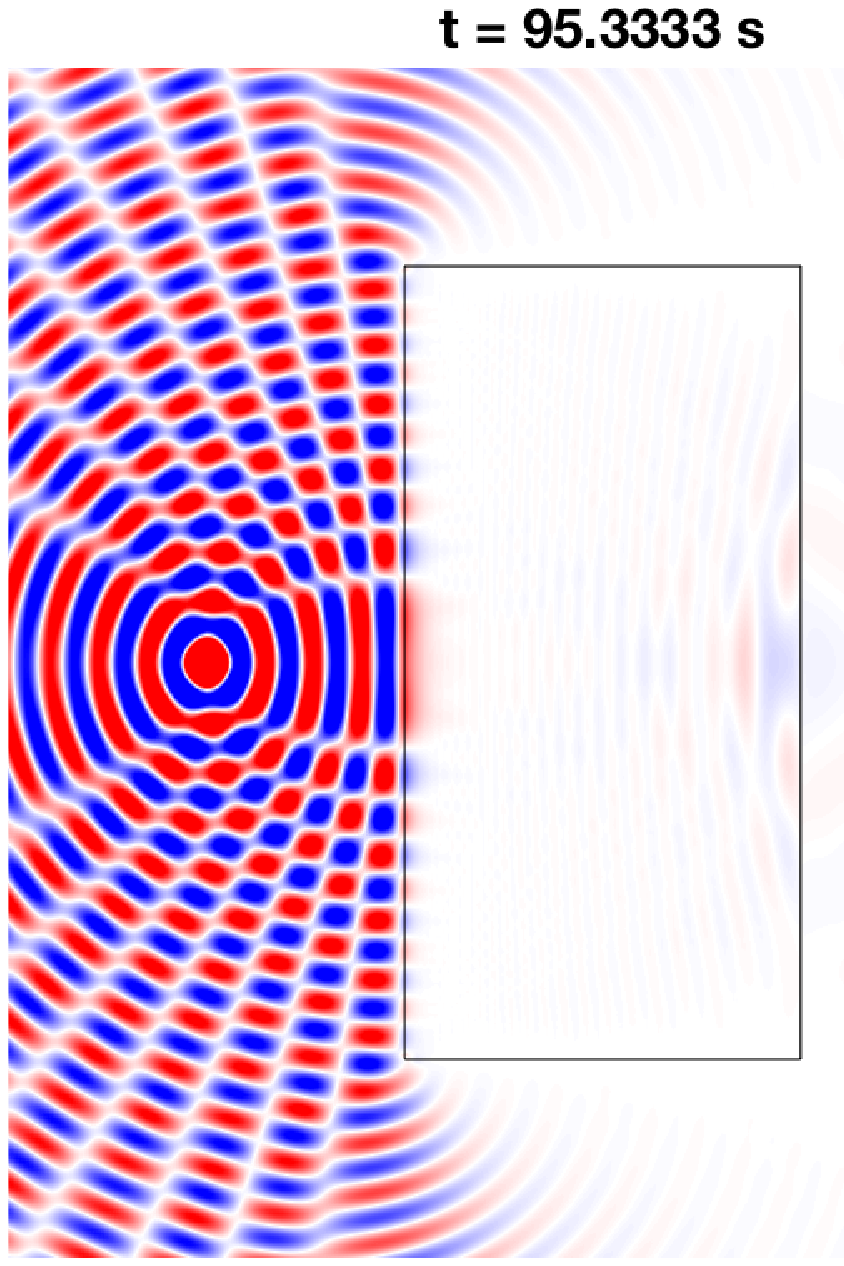}
\hfill
\includegraphics[width=0.49\textwidth]{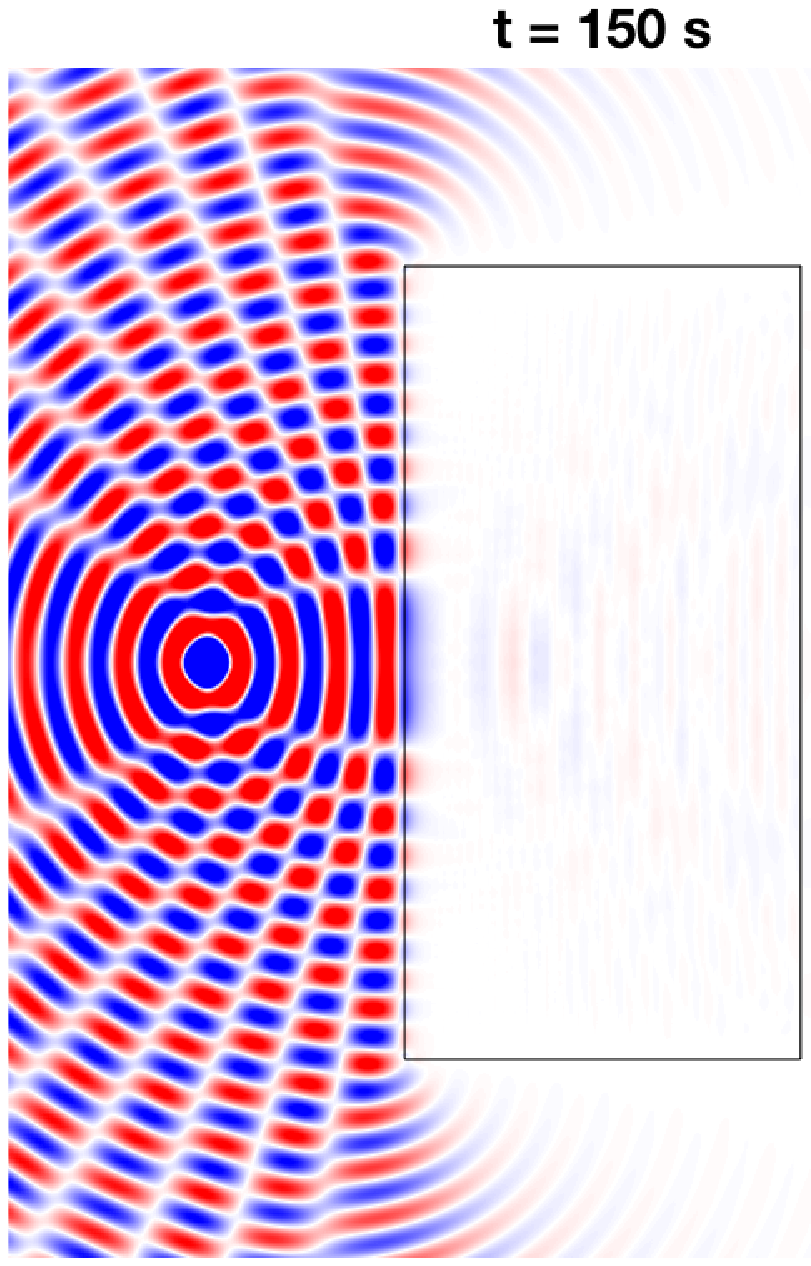}

\label{fig:expe2}
\end{figure}

\begin{figure}[!h]

\includegraphics[width=0.49\textwidth]{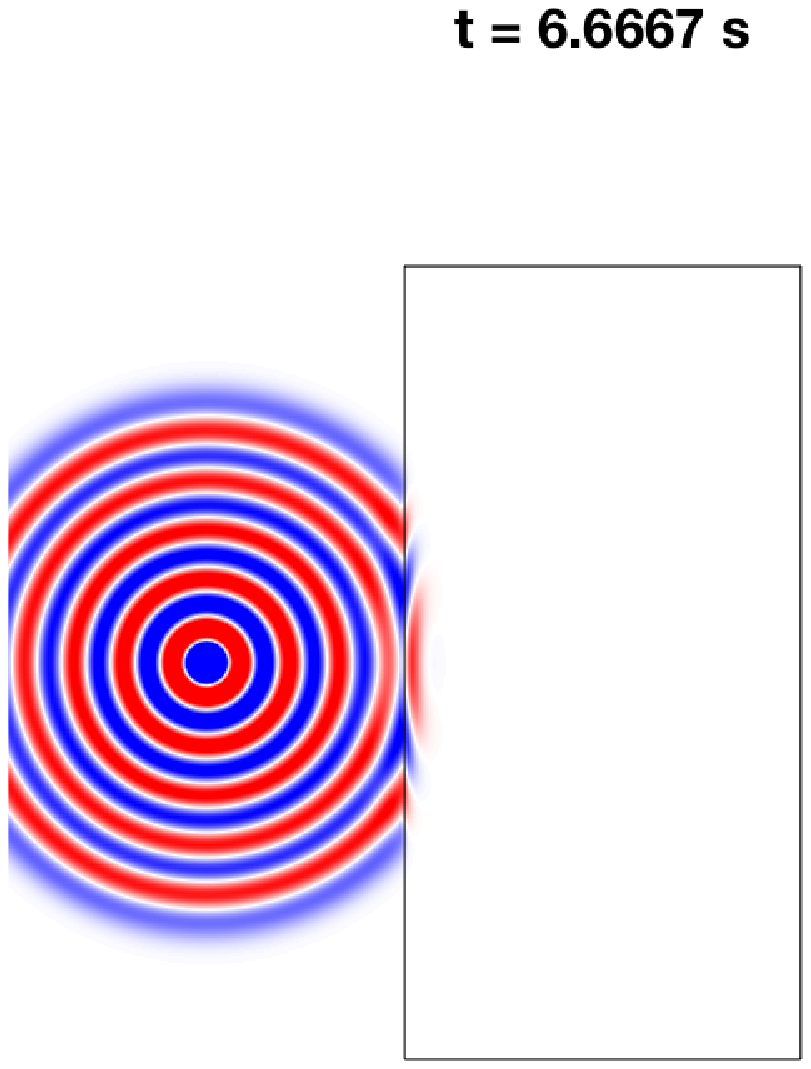}
\hfill
\includegraphics[width=0.49\textwidth]{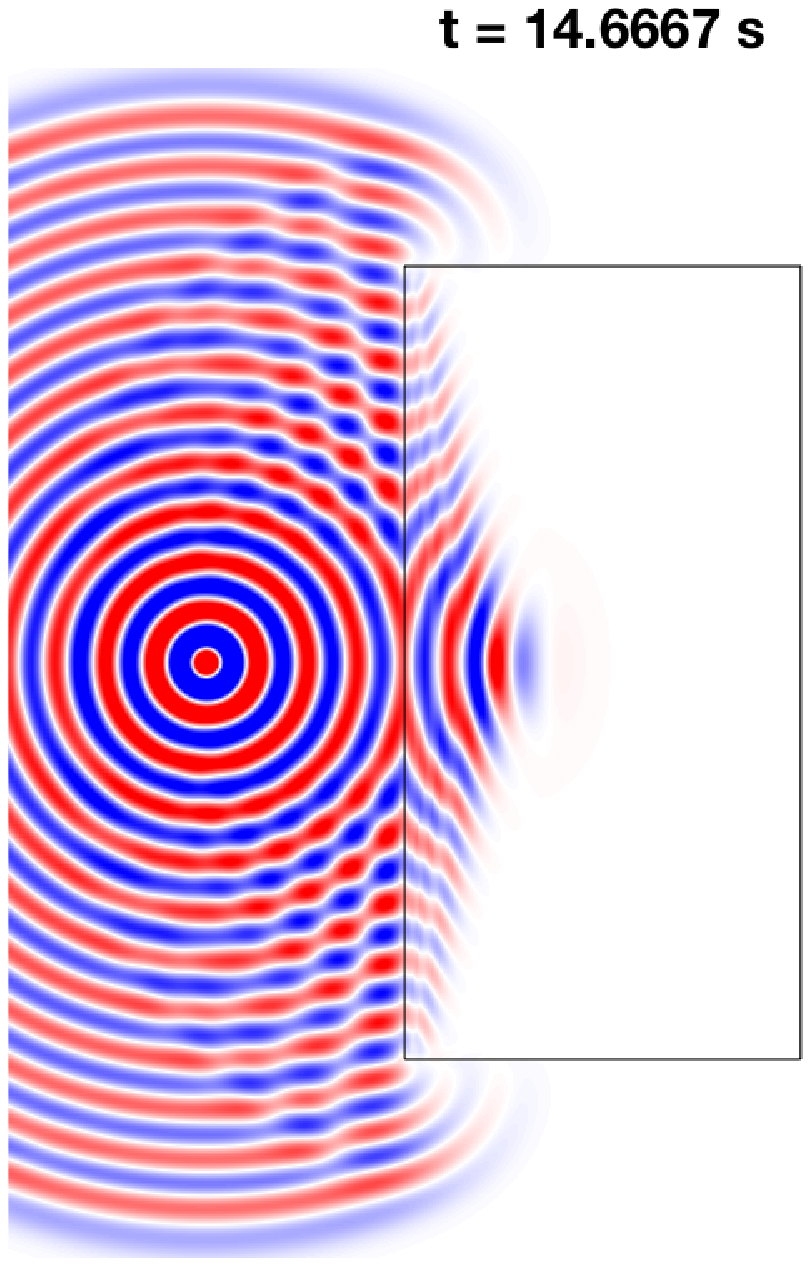}

\includegraphics[width=0.49\textwidth]{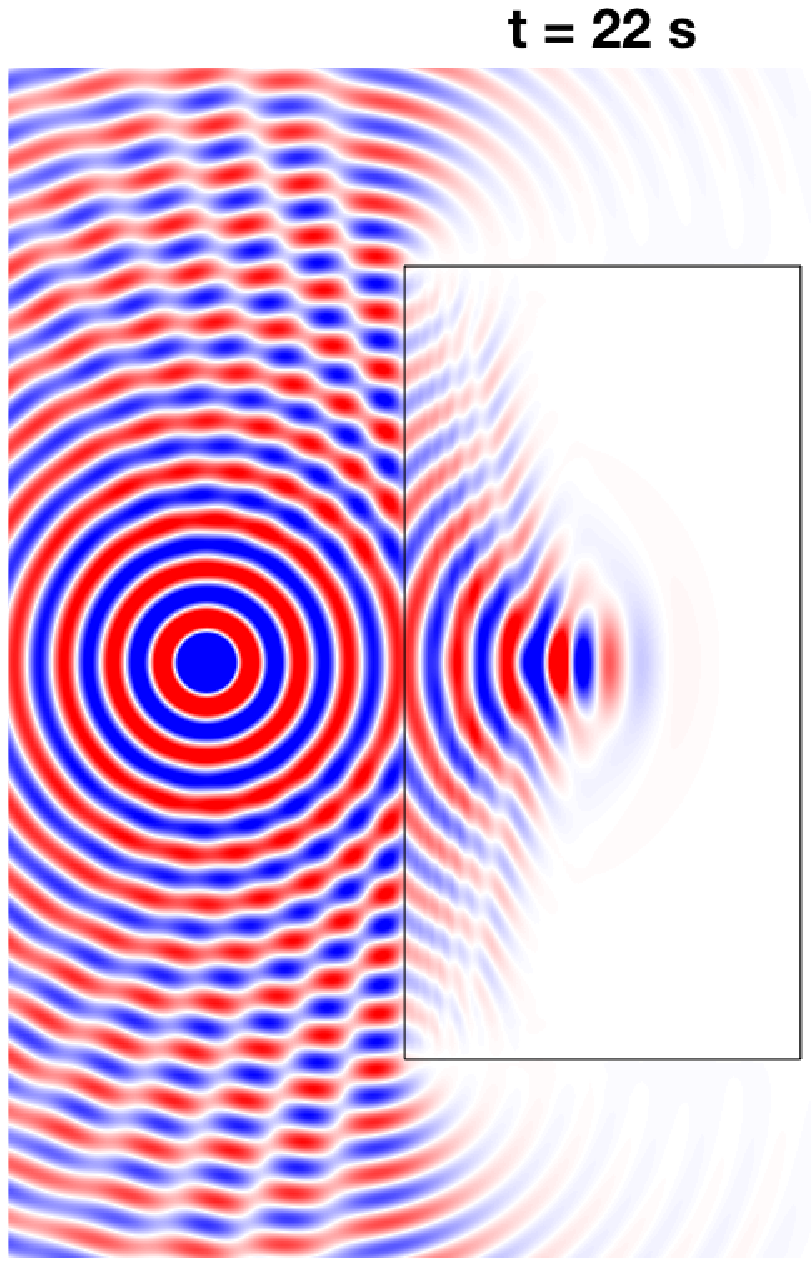}
\hfill
\includegraphics[width=0.49\textwidth]{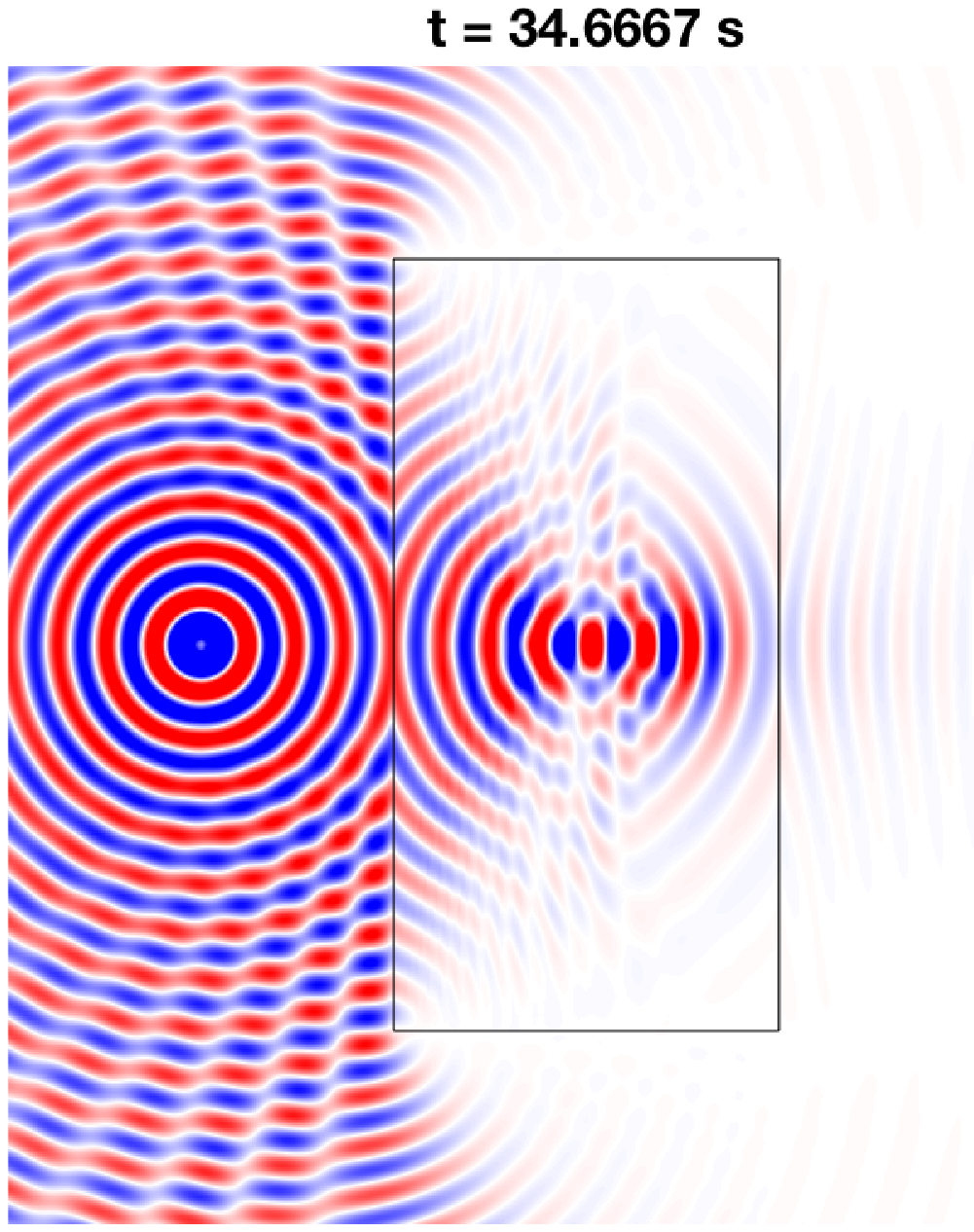}

\includegraphics[width=0.49\textwidth]{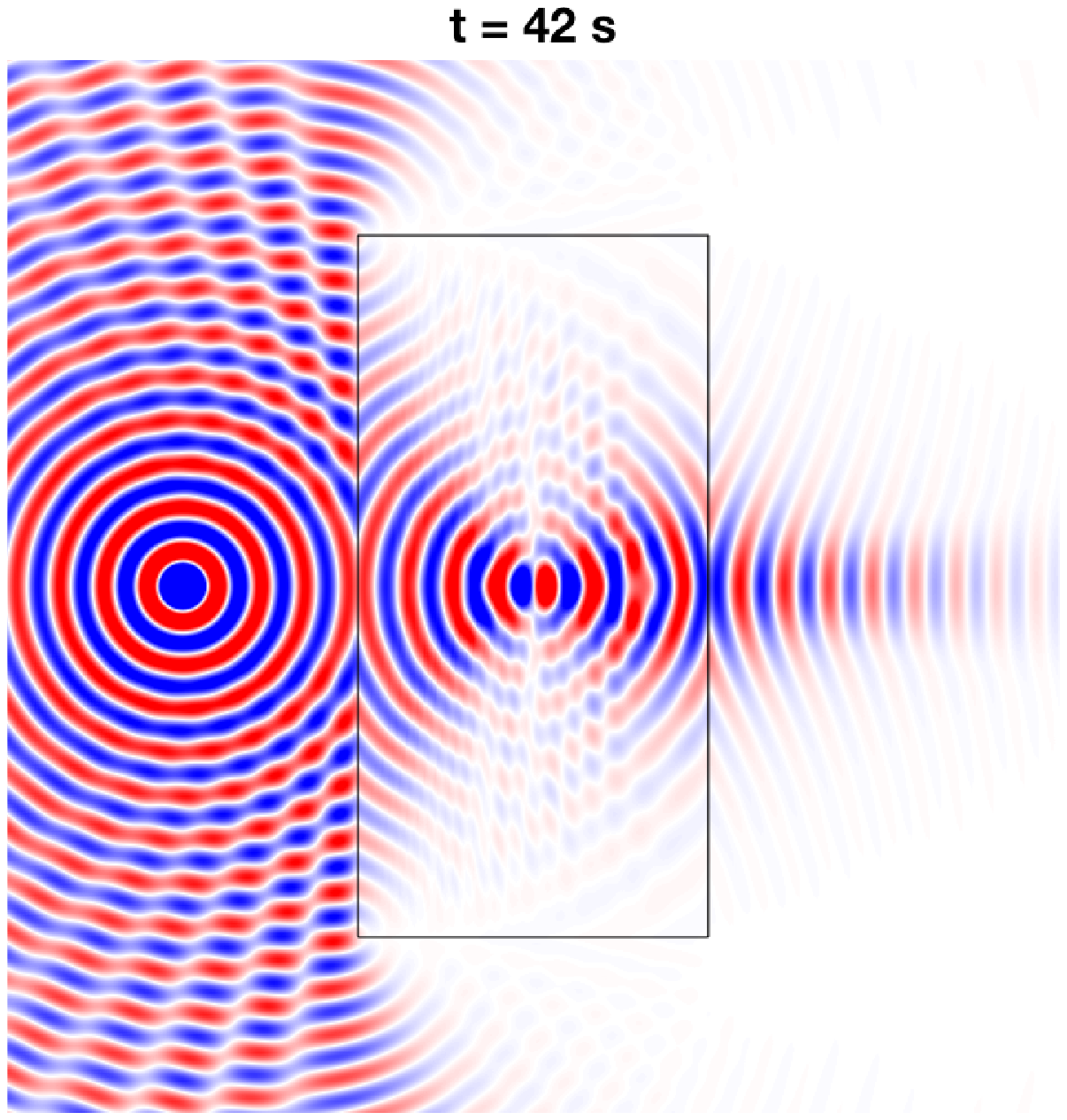}
\hfill
\includegraphics[width=0.49\textwidth]{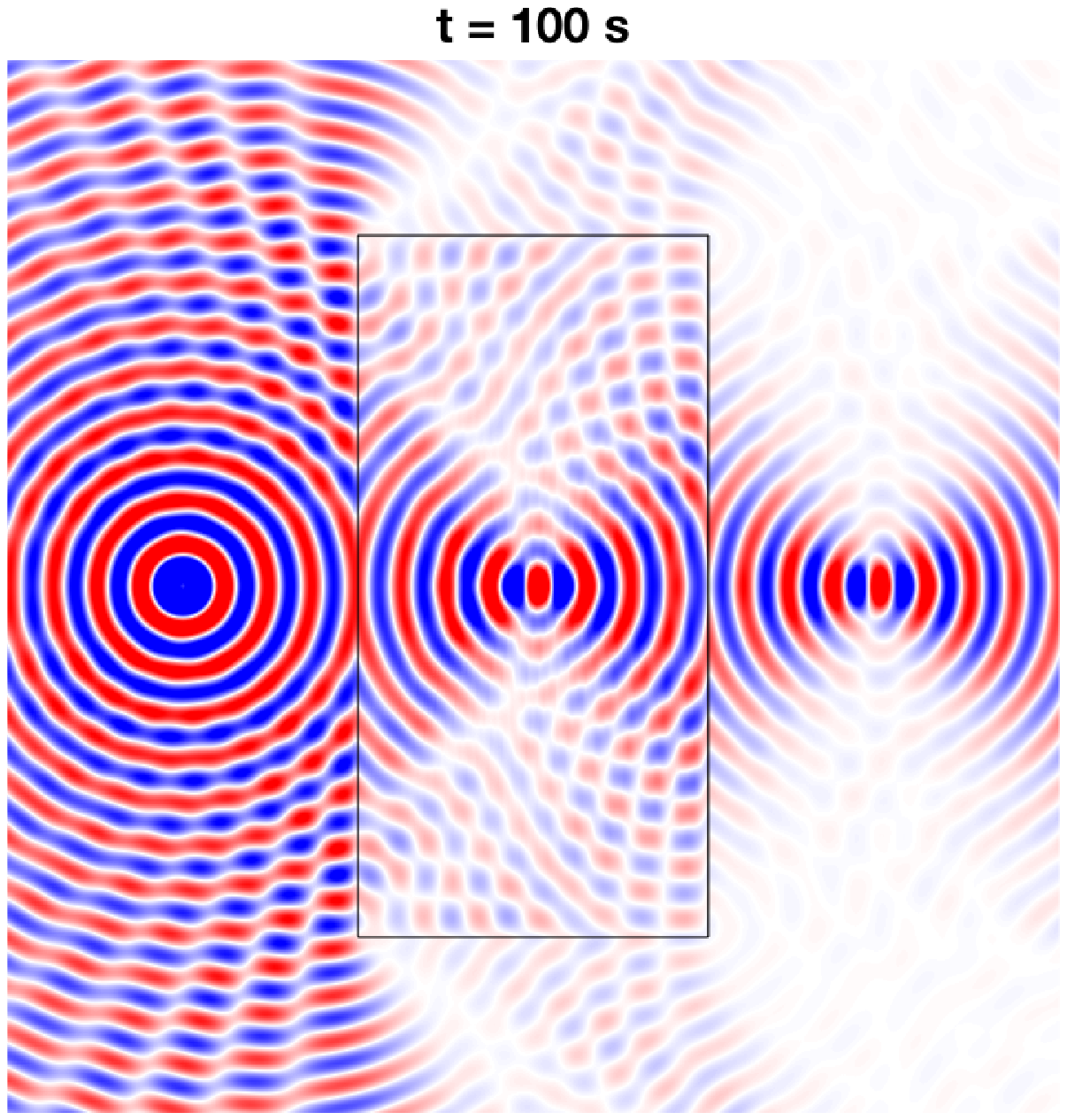}

\caption{Snapshots of $E_3$ at different times for the third experiment.}
\label{fig:expe3}
\end{figure}

%\bigskip
%\noindent{\bf Acknowledgement:} The authors warmly thank Graeme Milton for helpful information on  dispersive models. 

\end{document}